\numberwithin{equation}{section}
\numberwithin{figure}{section}
\theoremstyle{plain}
\newtheorem{thm}{\protect\theoremname}
\theoremstyle{definition}
\newtheorem{defn}[thm]{\protect\definitionname}
\theoremstyle{plain}
\newtheorem{lem}[thm]{\protect\lemmaname}
\theoremstyle{plain}
\newtheorem{prop}[thm]{\protect\propositionname}
\theoremstyle{plain}
\newtheorem{cor}[thm]{\protect\corollaryname}
\providecommand{\corollaryname}{Corollary}
\providecommand{\definitionname}{Definition}
\providecommand{\lemmaname}{Lemma}
\providecommand{\propositionname}{Proposition}
\providecommand{\theoremname}{Theorem}
\begin{document}


\begin{titlepage}

\thispagestyle{empty}

\begin{flushright}
YITP-18-41\\
\end{flushright}

\vspace{.4cm}
\begin{center}
\noindent{\large \bf Entanglement of Purification for Multipartite States\\ and its Holographic Dual}\\
\vspace{2cm}

Koji Umemoto$^{a}$ and Yang Zhou$^{b}$
\vspace{1cm}

{\it
$^{a}$Center for Gravitational Physics, \\
Yukawa Institute for Theoretical Physics (YITP), Kyoto University, \\
Kitashirakawa Oiwakecho, Sakyo-ku, Kyoto 606-8502, Japan\\ \vspace{3mm}
$^{b}$Department of Physics and Center for Field Theory and Particle Physics, \\
Fudan University, Shanghai 200433, China\\
}

\end{center}

\vspace{.5cm}
\begin{abstract}
We introduce a new information-theoretic measure of multipartite correlations $\Delta_P$, by generalizing the entanglement of purification to multipartite states. We provide proofs of its various properties, focusing on several entropic inequalities, in generic quantum systems. In particular, it turns out that the multipartite entanglement of purification gives an upper bound on multipartite mutual information, which is a generalization of quantum mutual information in the spirit of relative entropy. After that, motivated by a tensor network description of the AdS/CFT correspondence, we conjecture a holographic dual of multipartite entanglement of purification $\Delta_W$, as a sum of minimal areas of codimension-2 surfaces which divide the entanglement wedge into multi-pieces. We prove that this geometrical quantity satisfies all properties we proved for the multipartite entanglement of purification. These agreements strongly support the $\Delta_{P}=\Delta_{W}$ conjecture. We also show that the multipartite entanglement of purification is larger than multipartite squashed entanglement, which is a promising measure of multipartite quantum entanglement. We discuss potential saturation of multipartite squashed entanglement onto multipartite mutual information in holographic CFTs and its applications.

\end{abstract}

\end{titlepage}

\newpage
\setcounter{tocdepth}{2}
\tableofcontents
\section{Introduction}

Quantum entanglement has recently played significant roles in condensed
matter physics~\citep{Vidal:2002rm,Kitaev:2005dm,Levin:2006}, particle
physics~\citep{Casini:2004bw,Calabrese:2004eu,Casini:2012ei,Casini:2008cr,Hung:2018rhg}
and string theory~\citep{RT:RT-formula,HRT:HRT-formula}. To study
the quantum entanglement in states of a quantum system, we often
divide the total system into a subsystem $A$ and its complement $A^{c}$,
and then compute the entanglement entropy $S_{A}:=-{\rm Tr}\rho_{A}\log\rho_{A}$,
where $\rho_{A}$ is the reduced density matrix of a given total state
$\rho_{A}:={\rm Tr}_{A^{c}}\rho_{tot}$. In the gauge/gravity correspondence
\citep{Maldacena:AdS/CFT}, the Ryu-Takayanagi formula \citep{RT:RT-formula,HRT:HRT-formula}
allows one to compute the entanglement entropy in CFTs by a minimal
area of codimension-2 surface in AdS. This discovery opens a new era
of studying precise relations between spacetime geometry and quantum
entanglement \citep{Swingle:TensorNetwork,VanRaamsdonk:Buildingup1,MaldacenaSusskind:ERequalsEPR,FGHMR:LinearEinsteinequationfrom1stLaw,MiyajiTakayanagi:SurfaceStateCorrespondence,MTW:OptimizationofPathIntegral1,CKMTW:OptimizationofPathIntegral2,PYHP:HaPPYcode,FreedmanHeadrick:BitThreads}.

Recently the holographic counterpart of a new quantity independent
of entanglement entropy, called the entanglement of purification $E_{P}$
\citep{THLD:EoP}, has been conjectured \citep{TU:HEoP,NDHZS:HEoP}.
The entanglement of purification quantifies an amount of total correlation,
including quantum entanglement, for bipartite mixed states $\rho_{AB}$
acting on $\mathcal{H}_{AB}\equiv\mathcal{H}_{A}\otimes\mathcal{H}_{B}$.
On the other hand, the entanglement entropy truly measures quantum entanglement
only for pure states $\ket{\psi}_{AB}$. Indeed, it is not even a
correlation measure if a given total state is mixed. Thus, so
called $E_{P}=E_{W}$ conjecture fills a gap between correlations in mixed states and spacetime geometry.
This has been further studied in the literature \citep{BaoHalpern:HEoPgeneralization,ATU:EoPfreeQFTs,HiraiTamaokaYokoya:TowardsEoPinCFTs,EGP:WEReconstructionEoP,NRS:PullingBoundaryintoBulk}.

Another important direction is to explore multipartite correlation measure and its geometric dual.
It is well known that there are richer correlation structures in quantum
systems consisting of three or more subsystems, especially about quantum
entanglement (see e.g. \citep{HHHH:QuantumEntanglement}). For example,
there appear different separability criteria for multipartite states.
To understand operational aspects of an amount of entanglement
by means of (S)LOCC, one needs (infinitely) many kinds of standard
states, such as W and GHZ states. Therefore multipartite entanglement
is much more sophisticated than bipartite entanglement. 

On the other hand, the holographic interpretation of multipartite
correlations is less known, though it can play a crucial role to understand the emergence
of bulk geometry out of boundary renormalization group flows as well
as the idea of ER=EPR \citep{MaldacenaSusskind:ERequalsEPR,HHM:MonogamyofHMI,BHMMR:MultiboundaryWormholes,NezamiWalter:MultiEntinStabilizerTN}. In the literature, one quantity $\tilde{I}_{3}$
called tripartite information, which is one particular generalization of mutual information $I(A:B)=S_{A}+S_{B}-S_{AB}$, attracts
a lot of attention in holography \citep{HHM:MonogamyofHMI,HQRY:ChaosinQCs,Rota:TripartiteMI,MTV:NpartiteHMI,AMT:NpartiteHMI}.
However, it can be either positive or negative, which makes it a bit
hard to use it to diagnose physical properties of the system at times. 

In this paper, we first introduce a new measure of multipartite quantum
and classical correlations $\Delta_{P}$ in generic quantum systems.
It is given by generalizing the entanglement of purification to multipartite
states. We study its information-theoretic properties, especially
focusing on various entropic inequalities. It turns out that $\Delta_{P}$
gives an upper bound on another generalization of quantum mutual information introduced in \citep{YHHHOS:mEsq,AHS:mEsq}, the so called multipartite mutual information.

We then propose a holographic dual of multipartite entanglement of
purification $\Delta_{W}$, as a sum of minimal areas of codimension-2
surfaces in the entanglement wedge \citep{CKNR:EntanglementWedge,Wall:Maximinsurfaces,HHLR:EntanglementWedge},
motivated by the tensor network description of the AdS/CFT correspondence
\citep{Swingle:TensorNetwork,MiyajiTakayanagi:SurfaceStateCorrespondence,MTW:OptimizationofPathIntegral1,CKMTW:OptimizationofPathIntegral2}.
One typical example of $\Delta_{W}$ was drawn in Fig.\ref{fig:MEoP0}. We demonstrate that it satisfies
all the properties of the multipartite entanglement of purification
by geometrical proofs. These agreements tempt us to provide a new
conjecture $\Delta_{P}=\Delta_{W}$.

\begin{figure}[t]
\centering{}\includegraphics[scale=0.2]{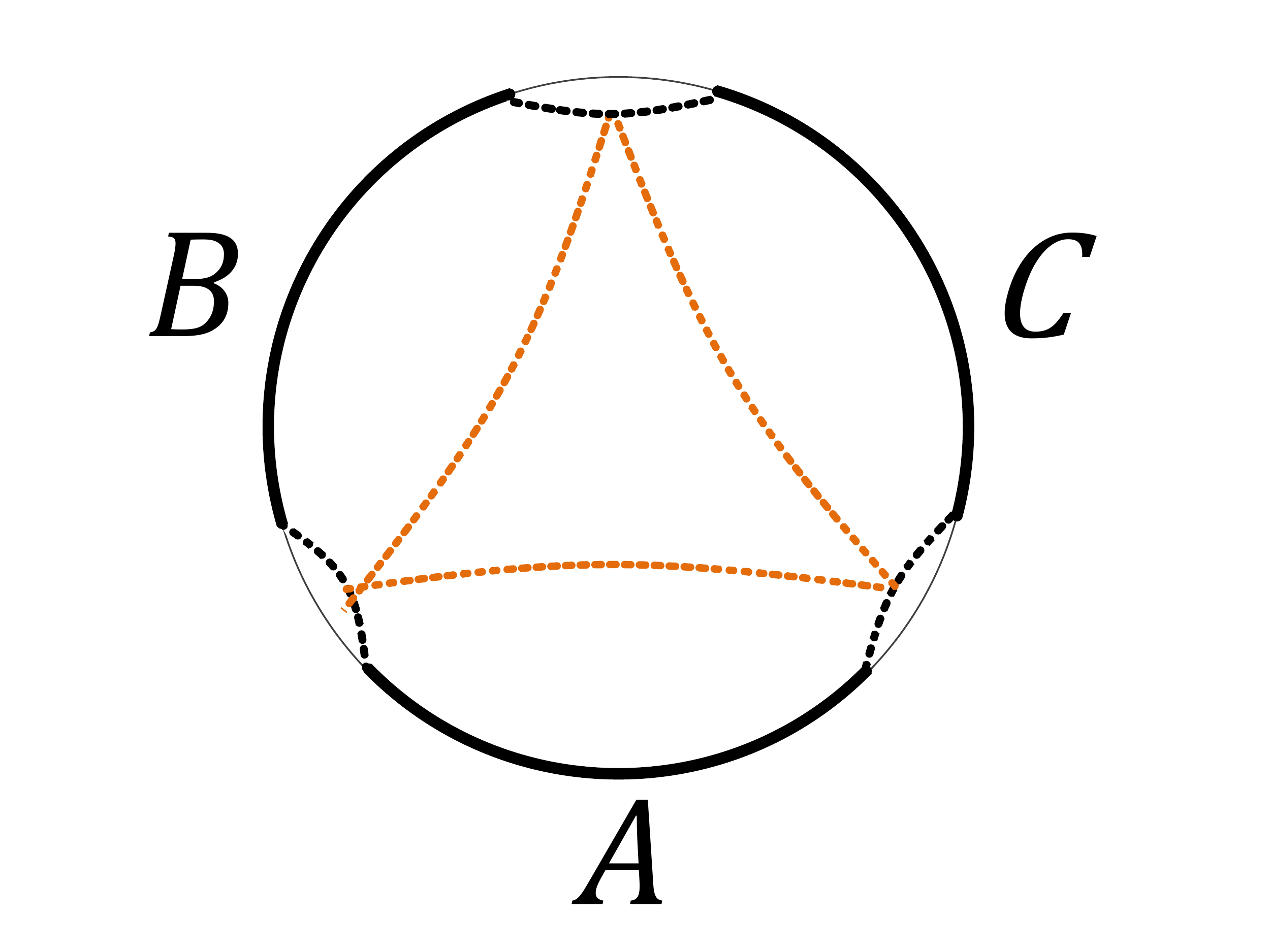}\caption{\label{fig:MEoP0} An example of minimal surfaces which gives $\Delta_{W}$
for a tripartite setup.}
\end{figure}

We clarify in the appendix that $\Delta_P$ is larger than the multipartite generalization \citep{YHHHOS:mEsq,AHS:mEsq} of squashed entanglement, which is a promising measure of genuine quantum entanglement \citep{Tucci:Esq,ChristandlWinter:SquashedEntanglement,KoashiWinter:monogamy}. We also discuss the holographic counterpart of squashed entanglement.


This paper is organized as follows: In Section \ref{sec2}, we give
the definition of multipartite entanglement of purification $\Delta_{P}$
and prove its information-theoretic properties. In Section \ref{sec3},
we introduce a multipartite generalization of the entanglement wedge
cross-section $\Delta_{W}$ in holography and study its properties geometrically, and find full agreement with those
we proved in Section \ref{sec2}. Based on these facts, we propose the conjecture
$\Delta_{P}=\Delta_{W}$. In Section \ref{sec5}, we conclude and discuss future questions. In appendix \ref{sec4}, the universal
relation between $\Delta_{P}$ and multipartite squashed entanglement
was clarified and the holographic counterpart of squashed entanglement was also discussed. \\

\textit{Note added}: After all the results in this paper were obtained, \citep{BaoHalpern:multipartiteEoP} appeared in which the authors defined the same multipartite
generalization of $E_{P}$ (up to a normalization factor) and tested a holographic dual which is different from $\Delta_{W}$.

\section{Multipartite entanglement of purification\label{sec2}}

In this section, we will define a generalization of entanglement of
purification for multipartite correlations and prove its various information-theoretic
properties.

Let us start from recalling the definition and basic properties of
the entanglement of purification \citep{THLD:EoP,BagchiPati:EoPproperties}. We consider a quantum state
$\rho_{AB}$ on a bipartite quantum system $\mathcal{H}_{AB}=\mathcal{H}_{A}\otimes\mathcal{H}_{B}$.
The entanglement of purification $E_{P}$ for a bipartite state $\rho_{AB}$
is defined by 
\begin{equation}
E_{P}(\rho_{AB}):=\min_{\ket{\psi}_{AA'BB'}}S_{AA'},\label{eq:Def EoP}
\end{equation}
where the minimization is taken over purifications $\rho_{AB}={\rm Tr}_{A'B'}[\ket{\psi}\bra{\psi}_{AA'BB'}]$.
This is an information-theoretic measure of total correlations, namely,
it captures both quantum and classical correlations between $A$
and $B$. In this point of view, it is similar as the quantum mutual information
$I(A:B)=S_{A}+S_{B}-S_{AB}$. Nevertheless, it is known that the regularized
version of entanglement of purification $E_{P}^{\infty}(\rho_{AB}):=\lim_{n\to\infty}E_{P}(\rho_{AB}^{\otimes n})/n$
has an operational interpretation in terms of EPR pairs and local
operation and asymptotically vanishing communication \citep{THLD:EoP}.
From the definition, we can see that $E_{P}$ measures the value of
quantum entanglement between $AA'$ and $BB'$ in the optimally purified
system.

We simply write $E_{P}(\rho_{AB})$ as $E_{P}(A:B)$ unless otherwise a
given state specified. We summarize some known properties of
$E_{P}$ for the reader's convenience:

(I) It reduces to the entanglement entropy for pure state $\rho_{AB}=\ket{\psi}\bra{\psi}_{AB}$,
\begin{equation}
E_{P}(A:B)=S_{A}=S_{B},\ {\rm {\rm for\ pure\ states}}.
\end{equation}

(II) It vanishes if and only if a given state $\rho_{AB}$ is a product, 
\begin{equation}
E_{P}(A:B)=0\ {\rm \Leftrightarrow}\ \rho_{AB}=\rho_{A}\otimes\rho_{B}.
\end{equation}

(III) It monotonically decreases upon discarding ancilla, 
\begin{equation}
E_{P}(A:BC)\geq E_{P}(A:B).
\end{equation}

(IV) It is bounded from above by the entanglement entropy,
\begin{equation}
E_{P}(A:B)\leq\min\{S_{A},S_{B}\}.
\end{equation}

(Va) It is bounded from below by a half of mutual information,
\begin{equation}
E_{P}(A:B)\geq\frac{I(A:B)}{2}.
\end{equation}

(Vb) For tripartite state $\rho_{ABC}$, it has a lower bound,
\begin{equation}
E_{P}(A:BC)\geq\frac{I(A:B)+I(A:C)}{2}.
\end{equation}

(VI) For tripartite pure state $\ket{\psi}_{ABC}$, it is polygamous,
\begin{equation}
E_{P}(A:BC)\leq E_{P}(A:B)+E_{P}(A:C).
\end{equation}

(VIIa) For a class of states that saturate the subadditivity, i.e.
$S_{AB}=S_{B}-S_{A}$, it reduces to the entanglement entropy, 
\begin{equation}
E_{P}(A:B)=S_{A}\ {\rm when}\ S_{AB}=S_{B}-S_{A}.
\end{equation}

(VIIb) For a class of states that saturate the strong subadditivity,
i.e. $S_{AB}+S_{AC}=S_{B}+S_{C}$, it reduces to the entanglement
entropy, 
\begin{equation}
E_{P}(A:B)=S_{A}\ {\rm when}\ S_{AB}+S_{AC}=S_{B}+S_{C}.
\end{equation}
These properties are not independent from each other and one can prove
(VI) from (I) and (Va), and also prove (VIIa) (or (VIIb)) from (III)
and (Va) (or (Vb)). All these properties are proven
in generic quantum systems. (I) allows us to regard $E_{P}$
as a generalization of entanglement entropy for quantifying an amount
of correlations for mixed states. We refer to \citep{THLD:EoP,BagchiPati:EoPproperties}
for detailed proofs and discussion.

\subsection{Definition}

We define a new quantity that captures total multipartite correlations
by generalizing the entanglement of purification. We first note that
we can rewrite the definition of $E_{P}$ \eqref{eq:Def EoP} as 
\begin{equation}
E_{P}(\rho_{AB})=\frac{1}{2}\min_{\ket{\psi}_{AA'BB'}}[S_{AA'}+S_{BB'}],
\end{equation}
because $S_{AA'}=S_{BB'}$ holds for any purifications $\ket{\psi}_{AA'BB'}$.
This form of $E_{P}$ motivate us to define a generalization of entanglement
of purification for a $n$-partite state $\rho_{A_{1}\cdots A_{n}}$
as follows. 
\begin{defn}
For $n$-partite quantum states $\rho_{A_{1}\cdots A_{n}}$, we define
the multipartite entanglement of purification $\Delta_{P}$ as \footnote{The normalization factor is not essential in our discussion, so we take it so that the entropic inequalities below become simple. We remark that it is common and even operationally meaningful at times in quantum information theory to define a multipartite generalization of some measure with $\frac{1}{2}$ prefactor regardless to $n$. If we follow this convention for $\Delta_P$, some results in this paper will seemingly change e.g. from $\Delta_P \geq I$ to $\Delta_P \geq I/2$.} 
\begin{equation}
\Delta_{P}(\rho_{A_{1}:\cdots:A_{n}})\coloneqq\min_{\ket{\psi}_{A_{1}A_{1}'\cdots A_{n}A_{n}'}}\sum_{i=1}^{n}S_{A_{i}A_{i}'},
\end{equation}
where the minimization is taken over all possible purifications of
$\rho_{A_{1}\cdots A_{n}}$. 
\end{defn}

We call it multipartite entanglement of purification and write
$\Delta_{P}(\rho_{A_{1}:\cdots:A_{n}})=\Delta_{P}(A_{1}:\cdots:A_{n})$
unless we need to specify a given state. $\Delta_{P}$ can be
regarded as the value of sum of quantum entanglement
in an optimal purification $\ket{\psi}_{A_{1}A_{1}'\cdots A_{n}A_{n}'}$,
between one of $n$-parties and the other $n-1$ parts. For example,
for tripartite state $\rho_{ABC}$, it can be represented as 
\begin{equation}\label{def3}
\Delta_{P}(A:B:C)\coloneqq\min_{\ket{\psi}_{AA'BB'CC'}}[S_{AA'}+S_{BB'}+S_{CC'}],
\end{equation}
and the entanglement entropies $S_{AA'},\ S_{BB'}$ and $S_{CC'}$
in the brackets characterize quantum entanglement between $AA':BB'CC'$,
$BB':AA'CC'$ and $CC':AA'BB'$, respectively. We also note that a
purification that gives the optimal value in (\ref{def3}) may not be unique in general,
as is so for the entanglement of purification $E_{P}$.

\subsection{Other measures}
Besides the entanglement of purification, there have been a lot of
measures of quantum and/or classical correlations which quantify
bipartite correlations for mixed states - including mutual information
and squashed entanglement \citep{Tucci:Esq,ChristandlWinter:SquashedEntanglement}.
Their generalization for multipartite cases have also been
proposed in the literature \citep{YHHHOS:mEsq,AHS:mEsq}. To see that multipartite
entanglement is not just a sum of bipartite one, let us consider
the famous GHZ state in a 3-qubit system: 
\begin{equation}
\ket{\rm GHZ}=\frac{1}{\sqrt{2}}(\ket{000}+\ket{111}).
\end{equation}
For this state, the three qubits are strongly correlated by genuine tripartite entanglement. The point is that, after one of the subsystems is traced
out, the remaining bipartite state is just a separable state and there
is no quantum entanglement. This is an example which shows that the structure
of multipartite quantum entanglement is much richer than bipartite ones. 

To quantify an amount of correlations for multipartite
states, there have been several proposals by generalizing the
mutual information. One well known quantity, the so-called tripartite information is defined as (based on Ven diagram) 
\begin{align}
\tilde{I}_{3}(A:B:C) & :=S_{A}+S_{B}+S_{C}-S_{AB}-S_{BC}-S_{CA}+S_{ABC}\nonumber \\
 & =I(A:C)+I(A:B)-I(A:BC).
\end{align}
This quantity plays an important role in holography
and has been studied e.g. in \citep{HHM:MonogamyofHMI,HQRY:ChaosinQCs,Rota:TripartiteMI,MTV:NpartiteHMI}. The tripartite information $\tilde{I}_{3}$, however, can take all of negative, zero, or positive values in generic quantum systems, and especially it always vanishes for all pure states.
These facts make it hard to be regarded as a standard measure
of correlations.

An alternative approach is introduced in \citep{YHHHOS:mEsq,AHS:mEsq}
and the authors called the generalization as multipartite mutual information,
defined by the relative entropy between a given original state and its local product state: 
\begin{equation}
I(A:B:C):=S(\rho_{ABC}||\rho_{A}\otimes\rho_{B}\otimes\rho_{C})=S_{A}+S_{B}+S_{C}-S_{ABC},
\end{equation}
where $S(\rho||\sigma)={\rm Tr}\rho\log\rho-{\rm Tr}\rho\log\sigma$. This definition is motivated by an expression of the bipartite mutual
information, 
\begin{equation}
I(A:B)=S(\rho_{AB}||\rho_{A}\otimes\rho_{B}).
\end{equation}
In general,
the multipartite mutual information for $n$-partite state $\rho_{A_{1}\cdots A_{n}}$is
defined in the same manner: 
\begin{equation}
I(A_{1}:\cdots:A_{n}):=S(\rho_{A_{1}\cdots A_{n}}||\rho_{A_{1}}\otimes\cdots\otimes\rho_{A_{n}})=\sum_{i=1}^{n}S_{A_{i}}-S_{A_{1}\cdots A_{n}}.
\end{equation}
This is clearly positive semi-definite and is monotonic under local
operations \citep{YHHHOS:mEsq}. Thus one may consider it as a more promising measure than $\tilde{I}$ for the purpose of total multipartite correlations. Note that we can
rewrite $I(A_{1}:\cdots:A_{n})$ as a suggestive form, which is a summation
of bipartite mutual information: 
\begin{equation}
I(A_{1}:\cdots:A_{n})=I(A_{1}:A_{2})+I(A_{1}A_{2}:A_{3})+\cdots+I(A_{1}\cdots A_{n-1}:A_{n}).
\end{equation}
We will carefully distinguish these two types of generalizations of
mutual information, $\tilde{I}$ and $I$, and we mainly discuss the latter one in the following context.

\subsection{Properties of multipartite entanglement of purification}

Here we explore properties of $\Delta_P$ in generic quantum systems. Firstly, we can immediately see the following property from the definition. 
\begin{lem}
\label{lem:Delta decoupling relations}If one of the subsystems is
decoupled $\rho_{A_{1}\cdots A_{n}}=\rho_{A_{1}\cdots A_{n-1}}\otimes\rho_{A_{n}}$,
then $\Delta_{P}(A_{1}:\cdots:A_{n-1}:A_{n})=\Delta_{P}(A_{1}:\cdots:A_{n-1})$. 
\end{lem}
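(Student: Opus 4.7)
The plan is to prove equality by sandwiching $\Delta_P(A_1:\cdots:A_n)$ between the two sides.

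For the $\leq$ direction, I would exploit the factorization hypothesis constructively. Let $\ket{\phi}_{A_1 A_1'\cdots A_{n-1} A_{n-1}'}$ be an optimal purification realizing $\Delta_P(A_1:\cdots:A_{n-1})$, and let $\ket{\chi}_{A_n A_n'}$ be any purification of $\rho_{A_n}$. Since $\rho_{A_1\cdots A_n}=\rho_{A_1\cdots A_{n-1}}\otimes\rho_{A_n}$, the product $\ket{\psi}:=\ket{\phi}\otimes\ket{\chi}$ is a legitimate purification of $\rho_{A_1\cdots A_n}$. By construction $S_{A_n A_n'}=0$ (the state on $A_nA_n'$ is pure), while the other entropies $S_{A_i A_i'}$ agree with those computed from $\ket{\phi}$. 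Hence $\Delta_P(A_1:\cdots:A_n)\leq\sum_{i=1}^{n}S_{A_iA_i'}=\Delta_P(A_1:\cdots:A_{n-1})$.

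For the $\geq$ direction, I would not use the factorization at all; in fact this half is a general monotonicity statement. Let $\ket{\psi}_{A_1A_1'\cdots A_nA_n'}$ be an optimal purification of $\rho_{A_1\cdots A_n}$. The same pure state is also a purification of $\rho_{A_1\cdots A_{n-1}}$ once we regroup its ancillas, taking $\tilde A_i'=A_i'$ for $i\leq n-2$ and $\tilde A_{n-1}':=A_{n-1}'A_nA_n'$. Then by definition of $\Delta_P$,
\begin{equation}
\Delta_P(A_1:\cdots:A_{n-1})\leq\sum_{i=1}^{n-2}S_{A_iA_i'}+S_{A_{n-1}A_{n-1}'A_nA_n'}.
\end{equation}
Subadditivity of von Neumann entropy gives $S_{A_{n-1}A_{n-1}'A_nA_n'}\leq S_{A_{n-1}A_{n-1}'}+S_{A_nA_n'}$, so the right-hand side is bounded by $\sum_{i=1}^{n}S_{A_iA_i'}=\Delta_P(A_1:\cdots:A_n)$. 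Combined with the first step this yields the claimed equality.

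There is no real obstacle here; the only mildly subtle point is psychological. A generic purification of a product state need not itself be a tensor product of purifications (any isometry on the combined ancilla produces another valid purification), so one cannot hope to just ``read off'' the result from the structure of an arbitrary optimal $\ket{\psi}$. The way around it is exactly the asymmetric argument above: the $\leq$ direction is proved by \emph{exhibiting} a product purification (which the factorization makes available), while the $\geq$ direction is handled at the level of entropies via subadditivity after bundling $A_n A_n'$ into the last ancilla, so that no factorization of the purification itself is ever required.
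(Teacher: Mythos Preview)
Your proof is correct. The $\leq$ direction is exactly the paper's argument: separately purify the two factors and observe that the $A_nA_n'$ contribution vanishes. For the $\geq$ direction the paper simply asserts that the result ``directly follows by definition,'' whereas you supply an explicit argument by bundling $A_nA_n'$ into the last ancilla and invoking subadditivity. This is a genuine improvement in rigor: as you correctly note, an arbitrary optimal purification of a product state need not itself factorize across the ancillas, so some entropy inequality really is needed to close the loop. Conceptually, though, both arguments rest on the same idea---the availability of a product purification---and your additional step is the natural completion of what the paper leaves implicit.
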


\begin{proof}
One can separately purify $\rho_{A_{n}}$ and the remaining parts,
and then it directly follows by definition. 
\end{proof}
Especially, for bipartite state $\rho_{AB}$, it reduces to the twice
of entanglement of purification: $\Delta_{P}(A:B)=2E_{P}(A:B)$. This guarantees that $\Delta_{P}$ is a generalization of $E_{P}$
to multipartite states.

We expect that $\Delta_{P}$ is a natural generalization of $E_{P}$
for multipartite correlations and has similar properties.
Indeed, one can prove the following properties which are the counterparts
of those of $E_{P}$ mentioned above. 
\begin{prop}
\label{prop:Delta for pure}If a given $n$-partite state
$\ket{\phi}_{A_{1}\cdots A_{n}}$ is pure, then the multipartite entanglement
of purification is given by the summation of entanglement entropy
of each single subsystem, 
\begin{equation}
\Delta_{P}(A_{1}:\cdots:A_{n})=\sum_{i=1}^{n}S_{A_{i}}\ {\rm for\ pure\ states.}
\end{equation}
\end{prop}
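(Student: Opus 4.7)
The plan is to sandwich $\Delta_P$ between $\sum_i S_{A_i}$ from above and below, using an exceedingly simple characterisation of purifications of pure states.

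First, for the upper bound, I would exhibit a particular purification that achieves $\sum_i S_{A_i}$. The natural choice is to take the ancillas $A_i'$ all trivial (one-dimensional), so that $\ket{\psi}_{A_1 A_1' \cdots A_n A_n'} = \ket{\phi}_{A_1\cdots A_n}$ itself already purifies the state (since it is pure to begin with). For this candidate one has $S_{A_i A_i'} = S_{A_i}$ for each $i$, hence $\Delta_P \leq \sum_{i} S_{A_i}$ by the definition as an infimum.

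Second, for the matching lower bound, I would invoke the standard fact that any purification of a pure state is itself a product with a pure ancilla: if $\rho_{A_1 \cdots A_n} = \ket{\phi}\bra{\phi}$ is pure, then every purification has the form $\ket{\psi}_{A_1 A_1' \cdots A_n A_n'} = \ket{\phi}_{A_1 \cdots A_n} \otimes \ket{\chi}_{A_1' \cdots A_n'}$ for some pure state $\ket{\chi}$ on the ancilla system. Consequently $\rho_{A_i A_i'} = \rho_{A_i}^{(\phi)} \otimes \rho_{A_i'}^{(\chi)}$ is a tensor product, which gives
\begin{equation}
S_{A_i A_i'} = S_{A_i} + S_{A_i'}^{(\chi)} \geq S_{A_i},
\end{equation}
since the von Neumann entropy is non-negative. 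Summing over $i$ and minimising over purifications yields $\Delta_P \geq \sum_i S_{A_i}$.

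Combining the two bounds proves the proposition; the extremal purification is simply the trivial one (or, equivalently, any purification with $\ket{\chi}$ a product of local pure states so that $S_{A_i'}^{(\chi)} = 0$ for every $i$). I do not foresee a real obstacle: the only substantive input is the uniqueness-up-to-isometry of purifications, specialised to the degenerate case where the state purified is already pure. The proof is fully parallel to the pure-state property (I) of $E_P$ recalled earlier.
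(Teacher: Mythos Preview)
Your proof is correct and follows essentially the same route as the paper: both use $\ket{\phi}_{A_1\cdots A_n}$ itself as the trivial purification for the upper bound, and invoke the fact that every purification of a pure state factors as $\ket{\phi}_{A_1\cdots A_n}\otimes\ket{\chi}_{A_1'\cdots A_n'}$ to establish that the ancillae can only add entropy. You spell out the additivity step $S_{A_iA_i'}=S_{A_i}+S_{A_i'}^{(\chi)}\geq S_{A_i}$ more explicitly than the paper, but the substance is identical.
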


\begin{proof}
Notice that $\ket{\phi}_{A_{1}\cdots A_{n}}$ itself is a purification,
and that all the other purifications should have a form $\ket{\psi}_{A_{1}A_{1}'\cdots A_{n}A_{n}'}=\ket{\phi}_{A_{1}\cdots A_{n}}\otimes\ket{\phi'}_{A_{1}'\cdots A_{n'}'}$.
Thus adding ancillary systems always increases the sum of entanglement
entropy of purified systems, and the minimum is achieved by
the original state $\ket{\psi}_{A_{1}A_{1}'\cdots A_{n}A_{n}'}=\ket{\phi}_{A_{1}\cdots A_{n}}$. 
\end{proof}
This is a generalization of property (I) and makes it easy to calculate
$\Delta_{P}$ for pure states. Note that in the case of pure state,
the multipartite mutual information also reduces to the sum of entanglement
entropy: $I(A_{1}:\cdots:A_{n})=\sum_{i=1}^{n}S_{A_{i}}-S_{A_{1}\cdots A_{n}}=\sum_{i=1}^{n}S_{A_{i}}$.
Thus we have $\Delta_{P}=I$ for any pure multipartite states. 
\begin{prop}
\label{prop:Fauthfullness of Delta }$\Delta_{P}$ vanishes if and
only if a given $n$-partite state is fully product, 
\begin{equation}
\Delta_{p}(A_{1}:\cdots:A_{n})=0\Leftrightarrow\rho_{A_{1}\cdots A_{n}}=\rho_{A_{1}}\otimes\cdots\otimes\rho_{A_{n}}.
\end{equation}
\end{prop}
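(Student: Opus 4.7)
The plan is to prove both directions directly from the definition, using only the non-negativity of entropy and the standard fact that a subsystem of a pure state with zero entropy factors off as a tensor product.

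For the ``if'' direction, assume $\rho_{A_1\cdots A_n}=\rho_{A_1}\otimes\cdots\otimes\rho_{A_n}$. I would purify each factor $\rho_{A_i}$ separately on an ancilla $A_i'$ to obtain $\ket{\phi_i}_{A_iA_i'}$, and then take the candidate purification $\ket{\psi}_{A_1A_1'\cdots A_nA_n'}=\bigotimes_{i=1}^{n}\ket{\phi_i}_{A_iA_i'}$. For this purification each $\rho_{A_iA_i'}=\ket{\phi_i}\bra{\phi_i}$ is pure, so $S_{A_iA_i'}=0$ for every $i$, giving $\sum_i S_{A_iA_i'}=0$. Since $\Delta_P$ is defined as a minimum of a sum of non-negative terms, this forces $\Delta_P(A_1:\cdots:A_n)=0$.

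For the ``only if'' direction, suppose $\Delta_P=0$. Then there exists a purification $\ket{\psi}_{A_1A_1'\cdots A_nA_n'}$ of $\rho_{A_1\cdots A_n}$ with $\sum_{i=1}^{n}S_{A_iA_i'}=0$. Since each $S_{A_iA_i'}\geq 0$, each term must vanish, i.e., each reduced state $\rho_{A_iA_i'}$ is pure. The key lemma to invoke is that if a pure state on a composite system $\mathcal{H}_X\otimes\mathcal{H}_Y$ has a pure reduced state on $\mathcal{H}_X$, then it factorizes as $\ket{\phi}_X\otimes\ket{\chi}_Y$. Applying this iteratively with $X=A_iA_i'$, $i=1,\ldots,n-1$, peels off the factors one by one and yields $\ket{\psi}=\bigotimes_{i=1}^{n}\ket{\phi_i}_{A_iA_i'}$. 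Tracing out the ancillas $A_1'\cdots A_n'$ then gives $\rho_{A_1\cdots A_n}=\bigotimes_{i=1}^{n}\mathrm{Tr}_{A_i'}\ket{\phi_i}\bra{\phi_i}=\rho_{A_1}\otimes\cdots\otimes\rho_{A_n}$, as required.

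There is no serious obstacle here; the only subtle point is the iterative factorization, but the ``pure-marginal implies product'' lemma is standard (a quick Schmidt-decomposition argument shows that if $\ket{\psi}_{XY}=\sum_k\sqrt{p_k}\ket{k}_X\ket{k}_Y$ and $\rho_X$ is pure, then the Schmidt spectrum is trivial and the state is a product). So the proposition follows from essentially one-line arguments in each direction. The writeup can be kept as short as the statement of Lemma~\ref{lem:Delta decoupling relations} and its proof.
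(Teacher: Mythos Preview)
Your proof is correct, but it takes a genuinely different route from the paper for the ``only if'' direction. The paper postpones the proof until after establishing Proposition~\ref{prop:n-partite Delta > MI} (namely $\Delta_P(A_1:\cdots:A_n)\geq I(A_1:\cdots:A_n)$), and then argues that $\Delta_P=0$ forces the relative entropy $S(\rho_{A_1\cdots A_n}\|\rho_{A_1}\otimes\cdots\otimes\rho_{A_n})=I(A_1:\cdots:A_n)$ to vanish, whence the full product form follows from the faithfulness of relative entropy. Your argument instead works directly at the level of the optimal purification: from $\sum_i S_{A_iA_i'}=0$ you extract that each block $A_iA_i'$ is pure and then peel off tensor factors via the Schmidt decomposition. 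Your approach is more elementary and self-contained (it needs neither the monotonicity of multipartite mutual information nor relative-entropy machinery), whereas the paper's approach is slicker once Proposition~\ref{prop:n-partite Delta > MI} is in hand and has the virtue of highlighting the structural relation $\Delta_P\geq I$ as the source of faithfulness. The ``if'' direction is identical in both.
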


Even though it can be directly proven, we postpone its proof after
the proposition \ref{prop:n-partite Delta > MI}. This property indicates
that $\Delta_{P}$ is not a measure of merely quantum
entanglement but of both quantum and classical correlations.
This is expected since it is a generalization of $E_{P}$.
Note that $\Delta_{P}$ is not a measure of genuine $n$-partite
correlations, but also includes $2,\cdots,n-1$-partite correlations in $\rho_{A_{1}\cdots A_{n}}$.\\

As a measure of correlations, it is natural to expect that $\Delta_{P}$ decreases when we trace
out a part of one subsystem. Actually, this is true as we can see in the following. 
\begin{prop}
\label{prop:Delta monotonicity}$\Delta_{P}$ monotonically decrease
upon discarding ancilla, 
\begin{equation}
\Delta_{P}(XA_{1}:\cdots:A_{n})\geq\Delta_{P}(A_{1}:\cdots:A_{n}).
\end{equation}
\end{prop}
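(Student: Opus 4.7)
The plan is to exhibit a bijective correspondence (or more precisely, an inclusion) between purifications of the two states that preserves the relevant entropy sum. Specifically, any purification of the larger state $\rho_{XA_1 A_2 \cdots A_n}$ is automatically a purification of the marginal $\rho_{A_1 \cdots A_n}$, once we reinterpret $X$ as part of the ancilla attached to $A_1$.

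Concretely, I would start by letting $|\psi\rangle_{XA_1A_1'A_2A_2'\cdots A_nA_n'}$ be an optimal purification realizing
\begin{equation}
\Delta_{P}(XA_{1}:A_{2}:\cdots:A_{n}) \;=\; S_{XA_{1}A_{1}'} + \sum_{i=2}^{n} S_{A_{i}A_{i}'}.
\end{equation}
Since tracing out $X A_1' A_2' \cdots A_n'$ yields $\rho_{A_1 \cdots A_n}$, the same state $|\psi\rangle$ is also a valid purification of $\rho_{A_1\cdots A_n}$ with enlarged ancilla $\tilde A_1' := X A_1'$ attached to $A_1$ and the other $A_i'$ unchanged. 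Under this reinterpretation one has $S_{A_1 \tilde A_1'} = S_{A_1 X A_1'} = S_{XA_1 A_1'}$, so the value of the entropy sum is unchanged.

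Since $\Delta_P(A_1:\cdots:A_n)$ is defined as the infimum of $\sum_i S_{A_i A_i'}$ over \emph{all} purifications of $\rho_{A_1\cdots A_n}$, and the reinterpreted $|\psi\rangle$ is one such purification, we conclude
\begin{equation}
\Delta_{P}(A_{1}:\cdots:A_{n}) \;\leq\; S_{A_{1}\tilde A_{1}'} + \sum_{i=2}^{n} S_{A_{i}A_{i}'} \;=\; \Delta_{P}(XA_{1}:A_{2}:\cdots:A_{n}),
\end{equation}
which is the desired inequality.

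There is no real obstacle here: the argument relies only on the observation that purifications of a state include purifications of its marginals (with the traced-out system absorbed into the ancilla), together with the fact that the first entropy term in the sum is insensitive to whether we call the extra factor $X$ part of the subsystem or part of its ancilla. The one point worth stating explicitly in the write-up is this identification of ancillas, since otherwise the reader might worry about whether the domains of minimization in the two definitions are being compared consistently.
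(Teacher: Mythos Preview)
Your proof is correct and follows essentially the same approach as the paper: take an optimal purification of $\rho_{XA_1\cdots A_n}$, observe that it is also a purification of $\rho_{A_1\cdots A_n}$ once $X$ (and its ancilla) is absorbed into the ancilla of $A_1$, and conclude by the definition of $\Delta_P$ as a minimum. The only cosmetic difference is that the paper labels the ancilla of the first party as $X'A_1'$ rather than a single $A_1'$, but the argument is identical.
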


\begin{proof}
It follows from the fact that all purifications of $\rho_{XA_{1}\cdots A_{n}}$
are included in these of $\rho_{A_{1}\cdots A_{n}}$. Namely, if $\ket{\psi}_{XX'A_{1}A_{1}'\cdots A_{n}A_{n}'}$
is a optimal purification for $\rho_{XA_{1}\cdots A_{n}},$ then it
is also one of the (not optimal in general) purification of
$\rho_{A_{1}\cdots A_{n}}$, thus 
\begin{equation}
\Delta_{P}(XA_{1}:\cdots:A_{n})=S_{A_{1}(A_{1}'XX')}+S_{A_{2}A_{2}'}+\cdots+S_{A_{n}A_{n}'}\geq\Delta_{P}(A_{1}:\cdots:A_{n}).
\end{equation}
\end{proof}
Now we give an upper bound on $\Delta_{P}$ in terms of a certain sum
of entanglement entropy. This is a generalization of property (IV). 
\begin{prop}
\label{prop:Delta upper bound}$\Delta_{P}$ is bounded from above
by 
\begin{equation}
\Delta_{P}(A_{1}:\cdots:A_{n})\leq\min_{i}[S_{A_{1}}+\cdots+S_{A_{1}\cdots A_{i-1}A_{i+1}\cdots A_{i}}+\cdots+S_{A_{n}}].
\end{equation}
\end{prop}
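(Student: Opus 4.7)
The plan is to exhibit, for each index $i \in \{1,\dots,n\}$, a specific purification of $\rho_{A_1\cdots A_n}$ that makes the sum $\sum_{j=1}^{n} S_{A_j A_j'}$ equal to the $i$-th term appearing inside the minimum, and then invoke the definition of $\Delta_P$ as an infimum over purifications. Taking the minimum over $i$ at the end yields the claimed inequality.

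Concretely, for a fixed $i$ I would choose the purification in which the entire purifying ancilla is attached to the $i$-th party. That is, set $A_j' = \emptyset$ (a one-dimensional trivial system) for every $j \neq i$, and let $A_i' =: R$ be a single ancilla of sufficient dimension so that there exists a pure state $\ket{\psi}_{A_1\cdots A_n R}$ with ${\rm Tr}_R\ket{\psi}\bra{\psi} = \rho_{A_1\cdots A_n}$. Such a purification always exists by the standard Schmidt/Stinespring construction. For this particular choice,
\begin{equation}
S_{A_j A_j'} = S_{A_j} \quad \text{for every } j \neq i,
\end{equation}
while for the distinguished index
\begin{equation}
S_{A_i A_i'} = S_{A_i R} = S_{A_1\cdots A_{i-1}A_{i+1}\cdots A_n},
\end{equation}
where the last equality uses the fact that the total state $\ket{\psi}_{A_1\cdots A_n R}$ is pure, so the entropy of any subsystem coincides with the entropy of its complement. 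Summing these contributions reproduces exactly the $i$-th expression inside the minimum on the right-hand side.

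Since $\Delta_P$ is defined as the infimum over all purifications, the value realized by any specific purification is an upper bound, and in particular the sum above provides an upper bound for each $i$. Taking the minimum over $i \in \{1,\dots,n\}$ finishes the argument. There is essentially no obstacle here beyond setting up notation cleanly: the proof is a direct evaluation on a family of canonically chosen purifications, and the only fact from quantum information used is $S_X = S_{X^c}$ on a pure state.
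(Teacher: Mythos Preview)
Your proof is correct and follows essentially the same approach as the paper: both construct, for each index $i$, a purification in which the entire purifying ancilla is attached to the $i$-th party (the paper does this concretely via $\ket{\psi}=\sum_k\sqrt{p_k}\ket{\phi^k}_{ABC}\otimes\ket{0}_{A'}\otimes\ket{0}_{B'}\otimes\ket{k}_{C'}$ in the tripartite case, while you phrase it abstractly with trivial ancillas for $j\neq i$), and then use purity to identify $S_{A_iA_i'}$ with the entropy of the complementary block. The remaining step---taking the minimum over $i$---is identical.
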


\begin{proof}
For simplicity, we will first prove this bound for tripartite state $\rho_{ABC}$
. Let us consider a standard purification of a given state
$\rho_{ABC}=\sum p_{k}\ket{\phi^{k}}\bra{\phi^{k}}_{ABC}$ such that
\begin{equation}\label{standP3}
\ket{\psi}_{AA'BB'CC'}=\sum_{k=1}^{{\rm rank}[\rho_{ABC}]}\sqrt{p_{k}}\ket{\phi^{k}}_{ABC}\otimes\ket{0}_{A'}\otimes\ket{0}_{B'}\otimes\ket{k}_{C'}.
\end{equation}
For this purification we have $\Delta_{P}(A:B:C)\leq S_{AA'}+S_{BB'}+S_{CC'}$. Note that $\rho_{AA'}=\rho_{A}\otimes\ket{0}\bra{0}_{A'}$
and the same for $B$, and $S_{CC'}=S_{AA'BB'}$, it can be easily
shown that 
\begin{equation}
S_{AA'}=S_{A},\ S_{BB'}=S_{B},\ S_{CC'}=S_{AB},
\end{equation}
for the state (\ref{standP3}). Thus we get $\Delta_{P}(A:B:C)\leq S_{A}+S_{B}+S_{AB}$.
Commuting $A,B,C$, we get three upper bounds on
$\Delta_{P}$, 
\begin{equation}
\Delta_{P}(A:B:C)\leq\min\{S_{A}+S_{B}+S_{AB},S_{B}+S_{C}+S_{BC},S_{C}+S_{A}+S_{CA}\}.
\end{equation}

The generalization of this proof to $n$-partite cases is straightforward. 
\end{proof}
These upper bounds indicates that if we have a bipartite state $\rho_{AB}$,
and consider extensions $\rho_{ABC}$, the upper bound of $\Delta_{P}(A:B:C)$
is totally determined by the information included in $\rho_{AB}$.
In other words, we can not arbitrarily increase the multipartite
correlations by adding ancillary systems (the upper bound can
be reached by any purification $\ket{\psi}_{ABC}$ of $\rho_{AB}$,
though it is not the only way to saturate these bounds as we
will see in corollary \ref{cor:Delta for SA saturation} and \ref{cor:Delta for SSA saturation}).\\

Next we state universal lower bounds on $\Delta_{P}$. We first show the following inequality satisfied for any tripartite state $\rho_{ABC}$. 
\begin{prop}
\label{prop:3-Delta lower bound}Tripartite entanglement of purification
$\Delta_{P}(A:B:C)$ is bounded from below by 
\begin{equation}
\Delta_{P}(A:B:C)\geq\max\{S_{A}+S_{B}+S_{C}-S_{ABC},2(S_{A}+S_{B}+S_{C})-S_{AB}-S_{BC}-S_{CA}\}.
\end{equation}
\end{prop}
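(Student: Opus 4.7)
The plan is to derive both lower bounds from two ingredients, applied to an arbitrary purification $\ket{\psi}_{AA'BB'CC'}$ of $\rho_{ABC}$: purity, which supplies identities like $S_{AA'BB'}=S_{CC'}$, and monotonicity of mutual information under partial trace (equivalently strong subadditivity), which supplies estimates like $I(AA':BB')\geq I(A:B)$. Combined, these two facts convert a statement about the unknown purification into a linear inequality among the three entropies $S_{AA'}$, $S_{BB'}$, $S_{CC'}$ that make up $\Delta_{P}$, and both bounds then follow by minimizing over all purifications.

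For the symmetric bound involving pairwise mutual informations, I would first use purity to rewrite $I(AA':BB') = S_{AA'} + S_{BB'} - S_{AA'BB'} = S_{AA'} + S_{BB'} - S_{CC'}$, then apply monotonicity to obtain
\begin{equation}
S_{AA'} + S_{BB'} - S_{CC'} \geq I(A:B),
\end{equation}
along with the two cyclic permutations obtained by rotating $A\to B\to C\to A$. Summing all three, the signed contributions add up to $S_{AA'}+S_{BB'}+S_{CC'}$ on the left, and minimizing over purifications yields $\Delta_{P}(A:B:C)\geq I(A:B)+I(B:C)+I(C:A) = 2(S_{A}+S_{B}+S_{C})-S_{AB}-S_{BC}-S_{CA}$, which is the second of the two claimed bounds.

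For the bound involving the multipartite mutual information $I(A:B:C)=S_{A}+S_{B}+S_{C}-S_{ABC}$, I would keep only one cyclic instance of the above and pair it with a monotonicity inequality across a different split: purity gives $I(AA'BB':CC')=2S_{CC'}$, while monotonicity gives $I(AA'BB':CC')\geq I(AB:C)$, so that
\begin{equation}
2S_{CC'} \geq S_{AB}+S_{C}-S_{ABC}.
\end{equation}
Adding this to the earlier inequality $S_{AA'}+S_{BB'}-S_{CC'}\geq S_{A}+S_{B}-S_{AB}$ cancels the $S_{AB}$ terms and, after taking the minimum over purifications, yields $\Delta_{P}(A:B:C)\geq S_{A}+S_{B}+S_{C}-S_{ABC}$. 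The only non-mechanical part of the argument is identifying the right bipartite splits to apply monotonicity across so that the right-hand sides combine into the two target expressions; once those splits are chosen, every step is a direct consequence of purity and strong subadditivity, and no new entropy inequality beyond those already used in the bipartite theory of $E_{P}$ is required.
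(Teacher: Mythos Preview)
Your proposal is correct and follows essentially the same argument as the paper. Both proofs rest on the same two ingredients: purity identities of the form $S_{AA'BB'}=S_{CC'}$ (equivalently, $I(AA':BB')=S_{AA'}+S_{BB'}-S_{CC'}$ and $I(AA'BB':CC')=2S_{CC'}$), and monotonicity of mutual information under partial trace. The paper packages the first bound as $\Delta_P=I(AA':BB')+I(AA'BB':CC')\geq I(A:B)+I(AB:C)$ and the second as $\Delta_P=I(AA':BB')+I(BB':CC')+I(CC':AA')\geq I(A:B)+I(B:C)+I(C:A)$; your version simply writes out the same identities in terms of the three entropies $S_{AA'},S_{BB'},S_{CC'}$ and sums, which is a cosmetic difference only.
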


\begin{proof}
Let us take an optimal purification $\ket{\psi}_{AA'BB''CC''}$. For
this state we have 
\begin{align}
\Delta_{P}(A:B:C) & =S_{AA'}+S_{BB'}+S_{CC'}\nonumber \\
 & =I(AA':BB')+I(AA'BB':CC')\nonumber \\
 & \geq I(A:B)+I(AB:C)\nonumber \\
 & =S_{A}+S_{B}+S_{C}-S_{ABC},
\end{align}
where in the third line we used the monotonicity of mutual information
\begin{equation}
I(AX:B)\geq I(A:B).
\end{equation}
Moreover, for tripartite pure states $\ket{\psi}_{AA'BB'CC'}$, we
have $I(AA'BB':CC')=I(AA':CC')+I(BB':CC')$. Thus we get 
\begin{align}
\Delta_{P}(A:B:C) & =I(AA':BB')+I(BB':CC')+I(CC':AA')\nonumber \\
 & \geq I(A:B)+I(B:C)+I(C:A)\nonumber \\
 & =2(S_{A}+S_{B}+S_{C})-S_{AB}-S_{BC}-S_{CA}.
\end{align}
\end{proof}
These bounds have a suggestive form in terms of tripartite mutual information\footnote{Note that it leads to a general relationship $\Delta_P(A:B:C)\geq\tilde{I_3}(A:B:C)$ for any quantum states.}:
\begin{equation}
\Delta_{P}(A:B:C)\geq\max\{I(A:B:C),\ I(A:B:C)+\tilde{I}_{3}(A:B:C)\}.
\end{equation}
In particular, $\Delta_{P}$ is always greater or equal to the multipartite
mutual information $I(A:B:C)$. We will see that this is also
true for $n$-partite states in the following proposition. 
\begin{prop}
\label{prop:n-partite Delta > MI}The multipartite entanglement of
purification $\Delta_{P}(A_{1}:\cdots:A_{n})$ is bounded from below
by the multipartite mutual information, 
\begin{equation}
\Delta_{P}(A_{1}:\cdots:A_{n})\geq I(A_{1}:\cdots:A_{n}).
\end{equation}
\end{prop}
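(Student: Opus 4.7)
The plan is to extend the tripartite argument of Proposition \ref{prop:3-Delta lower bound} to $n$ parties by combining the pure-state identity of Proposition \ref{prop:Delta for pure} with monotonicity of mutual information, chained through all subsystems. First, I fix an optimal purification $\ket{\psi}_{A_{1}A_{1}'\cdots A_{n}A_{n}'}$ realizing $\Delta_{P}(A_{1}:\cdots:A_{n})$. Because the global state is pure, $S_{A_{1}A_{1}'\cdots A_{n}A_{n}'}=0$, and hence
\begin{equation}
\Delta_{P}(A_{1}:\cdots:A_{n})=\sum_{i=1}^{n}S_{A_{i}A_{i}'}=\sum_{i=1}^{n}S_{A_{i}A_{i}'}-S_{A_{1}A_{1}'\cdots A_{n}A_{n}'}=I(A_{1}A_{1}':\cdots:A_{n}A_{n}'),
\end{equation}
so $\Delta_{P}$ is identified with the multipartite mutual information evaluated on the enlarged, purified system.

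Second, I would apply the chain-rule decomposition of multipartite mutual information,
\begin{equation}
I(A_{1}A_{1}':\cdots:A_{n}A_{n}')=\sum_{k=1}^{n-1}I\bigl(A_{1}A_{1}'\cdots A_{k}A_{k}':A_{k+1}A_{k+1}'\bigr),
\end{equation}
and then use the bipartite monotonicity $I(XY:Z)\geq I(X:Z)$ (equivalent to strong subadditivity) on each term after discarding the primed ancillas, yielding
\begin{equation}
I\bigl(A_{1}A_{1}'\cdots A_{k}A_{k}':A_{k+1}A_{k+1}'\bigr)\geq I(A_{1}\cdots A_{k}:A_{k+1}).
\end{equation}
Summing and using the same chain rule in reverse for the original subsystems gives $\Delta_{P}(A_{1}:\cdots:A_{n})\geq I(A_{1}:\cdots:A_{n})$. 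A more compact alternative is to invoke in one line the data-processing inequality for the relative entropy $S(\rho\|\rho_{A_{1}}\otimes\cdots\otimes\rho_{A_{n}})$ under the local partial traces $A_{i}A_{i}'\to A_{i}$, since this is precisely the monotonicity of the multipartite mutual information already noted in the previous subsection.

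The main step, and the one worth stating carefully, is the first identification: once one recognizes $\sum_{i}S_{A_{i}A_{i}'}=I(A_{1}A_{1}':\cdots:A_{n}A_{n}')$ for purifications, the rest reduces to a standard chain of data-processing inequalities, with no genuine obstacle beyond bookkeeping of indices. As an immediate dividend, the postponed Proposition \ref{prop:Fauthfullness of Delta } follows: iterated use of Lemma \ref{lem:Delta decoupling relations} handles the fully product direction, while the converse is obtained by combining this proposition with the known faithfulness of the multipartite mutual information, which vanishes iff $\rho_{A_{1}\cdots A_{n}}=\rho_{A_{1}}\otimes\cdots\otimes\rho_{A_{n}}$.
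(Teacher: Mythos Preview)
Your proof is correct and follows essentially the same approach as the paper: both identify $\sum_{i}S_{A_{i}A_{i}'}$ with $I(A_{1}A_{1}':\cdots:A_{n}A_{n}')$ using purity of the purification, and then invoke monotonicity of multipartite mutual information under partial trace of the ancillas. The paper quotes this monotonicity directly from \citep{YHHHOS:mEsq}, whereas you unpack it via the chain rule and term-by-term bipartite monotonicity (and also mention the data-processing alternative, which is exactly the paper's one-line route); these are just two presentations of the same step.
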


\begin{proof}
The proof is essentially the same as that in tripartite case. Let us consider an optimal
purification $\ket{\psi}_{A_{1}A_{1}'\cdots A_{n}A_{n}'}$ for $\rho_{A_{1}\cdots A_{n}}$.
Then we have 
\begin{align}
\Delta_{P}(A_{1}:\cdots:A_{n}) & =\sum_{i=1}^{n}S_{A_{i}A_{i}'}=\sum_{i=1}^{n}S_{A_{i}A_{i}'}-S_{A_{1}A_{1}'\cdots A_{n}A_{n}'}\nonumber \\
 & =I(A_{1}A_{1}':\cdots:A_{n}A_{n}')\nonumber \\
 & \geq I(A_{1}:\cdots:A_{n}),
\end{align}
where we used the property of multipartite mutual information \citep{YHHHOS:mEsq}
\begin{equation}
I(A_{1}X:\cdots:A_{n})\geq I(A_{1}:\cdots:A_{n}).
\end{equation}

\end{proof}
This is a generalization of property (Va), (Vb) and provides general
relationship between two types of multipartite total correlation measures
$I$ and $\Delta_{P}$. It is worth to point out that these two quantities behave very similarly.
Indeed, the propositions \ref{lem:Delta decoupling relations}, \ref{prop:Delta for pure},
\ref{prop:Fauthfullness of Delta }, \ref{prop:Delta monotonicity},
and \ref{prop:Delta upper bound} are also true for multipartite mutual
information. One exception is, the lower bound of
tripartite case, $\Delta_{P}(A:B:C)\geq I(A:B:C)+\tilde{I}_{3}(A:B:C)$, which
is obviously violated for $I(A:B:C)$ when $\tilde{I}_{3}(A:B:C)$
is positive.\\

The proposition \ref{prop:n-partite Delta > MI} also allows us to
give a simple proof of the proposition \ref{prop:Fauthfullness of Delta }. 
\begin{proof}
If a $n$-partite state is totally product $\rho_{A_{1}\cdots A_{n}}=\rho_{A_{1}}\otimes\cdots\otimes\rho_{A_{n}}$,
then one can get $\Delta_{P}=0$ by purifying each subsystems independently.
On the other hand, if $\Delta_{P}(A_{1}:\cdots:A_{n})=0$, then $I(A_{1}:\cdots:A_{n})=S(\rho_{A_{1}\cdots A_{n}}||\rho_{A_{1}}\otimes\cdots\otimes\rho_{A_{n}})=0$
following the proposition \ref{prop:n-partite Delta > MI}. Thus, the non-degeneracy
of relative entropy leads to $\rho_{A_{1}\cdots A_{n}}=\rho_{A_{1}}\otimes\cdots\otimes\rho_{A_{n}}$. 
\end{proof}
Using the above arguments, some properties of $\Delta_{P}$ follow
as corollaries. 
\begin{cor}
For any pure $n$-partite state, $\Delta_{P}$ is polygamous:
\begin{equation}
\Delta_{P}(A_{1}:\cdots:A_{n-1}:BC)\leq\Delta_{P}(A_{1}:\cdots:A_{n-1}:B)+\Delta_{P}(A_{1}:\cdots:A_{n-1}:C).
\end{equation}
\end{cor}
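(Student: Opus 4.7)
The plan is to mimic exactly the bipartite deduction of property (VI) from (I) and (Va), upgrading each ingredient to its multipartite analogue already established: Proposition \ref{prop:Delta for pure} in place of (I), and Proposition \ref{prop:n-partite Delta > MI} in place of (Va). The one extra ingredient will be ordinary subadditivity of the von Neumann entropy across the $A_i$.

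First I would reduce the left-hand side. Since $\ket{\psi}_{A_{1}\cdots A_{n-1}BC}$ is a pure state, Proposition \ref{prop:Delta for pure} applies to the $n$-partite cut $(A_{1}:\cdots:A_{n-1}:BC)$ and gives
\begin{equation}
\Delta_{P}(A_{1}:\cdots:A_{n-1}:BC) \;=\; \sum_{i=1}^{n-1} S_{A_{i}} + S_{BC}.
\end{equation}

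Next I would lower-bound each term on the right-hand side by the corresponding multipartite mutual information. Proposition \ref{prop:n-partite Delta > MI} yields
\begin{equation}
\Delta_{P}(A_{1}:\cdots:A_{n-1}:X) \;\geq\; I(A_{1}:\cdots:A_{n-1}:X) \;=\; \sum_{i=1}^{n-1} S_{A_{i}} + S_{X} - S_{A_{1}\cdots A_{n-1}X}
\end{equation}
for $X=B$ and $X=C$. Because $\ket{\psi}_{A_{1}\cdots A_{n-1}BC}$ is pure, the reduced entropies on complementary subsystems match: $S_{A_{1}\cdots A_{n-1}B} = S_{C}$ and $S_{A_{1}\cdots A_{n-1}C} = S_{B}$. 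Substituting and adding the two bounds, the $\pm S_B$ and $\pm S_C$ contributions cancel, leaving
\begin{equation}
\Delta_{P}(A_{1}:\cdots:A_{n-1}:B) + \Delta_{P}(A_{1}:\cdots:A_{n-1}:C) \;\geq\; 2\sum_{i=1}^{n-1} S_{A_{i}}.
\end{equation}

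Finally I would close the gap with subadditivity. Using purity once more, $S_{BC} = S_{A_{1}\cdots A_{n-1}}$, and iterated subadditivity gives $S_{A_{1}\cdots A_{n-1}} \leq \sum_{i=1}^{n-1} S_{A_{i}}$; hence $2\sum_i S_{A_i} \geq \sum_i S_{A_i} + S_{BC}$, which is exactly the left-hand side computed in Step~1.

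There is really no main obstacle here beyond bookkeeping; the only point that requires a moment's care is verifying that the slack produced by the lower bound $\Delta_P \geq I$ is indeed absorbed by subadditivity of the entropies across the $A_i$, which is what makes the doubled sum $2\sum_i S_{A_i}$ strong enough to dominate $\sum_i S_{A_i}+S_{BC}$.
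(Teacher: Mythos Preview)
Your proof is correct and follows essentially the same approach as the paper: both use Proposition~\ref{prop:Delta for pure} to evaluate the left-hand side, Proposition~\ref{prop:n-partite Delta > MI} together with purity ($S_{A_{1}\cdots A_{n-1}B}=S_{C}$, $S_{A_{1}\cdots A_{n-1}C}=S_{B}$) to bound the right-hand side below by $2\sum_{i}S_{A_{i}}$, and iterated subadditivity $S_{BC}=S_{A_{1}\cdots A_{n-1}}\leq\sum_{i}S_{A_{i}}$ to close the chain. The only difference is the order of presentation.
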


\begin{proof}
For any pure state $\ket{\phi}_{A_{1}\cdots A_{n}BC}$ it follows
from the proposition \ref{prop:Delta for pure} that 
\begin{align}
\Delta_{P}(A_{1}:\cdots:A_{n-1}:BC) & =\sum_{i=1}^{n-1}S_{A_{i}}+S_{BC}=\sum_{i=1}^{n-1}S_{A_{i}}+S_{A_{1}\cdots A_{n-1}}\nonumber \\
 & \leq2\sum_{i=1}^{n-1}S_{A_{i}}\nonumber \\
 & =\sum_{i=1}^{n-1}S_{A_{i}}+S_{B}-S_{A_{1}\cdots A_{n-1}B}+\sum_{i=1}^{n-1}S_{A_{i}}+S_{C}-S_{A_{1}\cdots A_{n-1}C}\nonumber \\
 & \leq\Delta(A_{1}:\cdots:A_{n-1}:B)+\Delta(A_{1}:\cdots:A_{n}:C),
\end{align}
where in the first inequality we used the subadditivity of von Neumann
entropy recursively, in the third line $S_{B}=S_{A_{1}\cdots A_{n-1}C},\ S_{C}=S_{A_{1}\cdots A_{n-1}B}$
for pure states, and in the last inequality the proposition \ref{prop:n-partite Delta > MI}. 
\end{proof}
As $E_{P}$ is so, the multipartite entanglement of purification
is difficult to calculate in general because of the minimization
over infinitely many purifications. Nevertheless, there is a class of
quantum states for which one can rigorously calculate $\Delta_{P}$
in terms of entanglement entropy.
\begin{cor}
\label{cor:Delta for SA saturation}For a class of tripartite states
$\rho_{ABC}$ that saturate the subadditivity $S_{ABC}=S_{C}-S_{AB}$,
we have $\Delta_{P}(A:B:C)=S_{A}+S_{B}+S_{AB}$. 
\end{cor}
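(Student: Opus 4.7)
The plan is to sandwich $\Delta_P(A:B:C)$ between matching upper and lower bounds that have already been established in the section, and then to observe that the saturation hypothesis $S_{ABC}=S_C-S_{AB}$ collapses them to the same number. So this is really a corollary in the literal sense: no new purification needs to be constructed and no new inequality needs to be proven.

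For the upper bound, I would apply Proposition \ref{prop:Delta upper bound}, which states
\begin{equation}
\Delta_P(A:B:C)\leq\min\{S_A+S_B+S_{AB},\,S_B+S_C+S_{BC},\,S_C+S_A+S_{CA}\}.
\end{equation}
In particular, picking the first term of the minimum gives $\Delta_P(A:B:C)\leq S_A+S_B+S_{AB}$, with no use of the saturation hypothesis. Concretely, this bound is realised by the standard purification of $\rho_{ABC}$ in which the purifying register is attached to $C$, so that the entropies of the three purified subsystems evaluate to $S_A$, $S_B$ and $S_{AB}$.

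For the lower bound, I would invoke Proposition \ref{prop:3-Delta lower bound}, which gives
\begin{equation}
\Delta_P(A:B:C)\geq S_A+S_B+S_C-S_{ABC}.
\end{equation}
Substituting the saturation assumption $S_{ABC}=S_C-S_{AB}$ turns the right-hand side into $S_A+S_B+S_C-(S_C-S_{AB})=S_A+S_B+S_{AB}$. Combining the two bounds yields $\Delta_P(A:B:C)=S_A+S_B+S_{AB}$, which is the claim.

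There is no real obstacle here; the only subtlety worth a sentence is to check that we are using the correct term of the minimum in the upper bound and to match the convention that the ``subadditivity saturation'' referred to in the corollary is the tripartite analogue of the Araki--Lieb saturation in property (VIIa), namely that $AB$ is purified by $C$ up to a product ancilla. Conceptually, the saturation hypothesis is exactly what is needed to make the simple product-register purification used in Proposition \ref{prop:Delta upper bound} be optimal, so the corollary can also be read as identifying a class of states for which the optimal purification is known explicitly.
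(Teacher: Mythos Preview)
Your proof is correct and follows exactly the paper's own argument: sandwich $\Delta_P(A:B:C)$ between the upper bound $S_A+S_B+S_{AB}$ from Proposition~\ref{prop:Delta upper bound} and the lower bound $S_A+S_B+S_C-S_{ABC}$ from Proposition~\ref{prop:3-Delta lower bound}, then use $S_{ABC}=S_C-S_{AB}$ to make them coincide. The extra remarks about the standard purification and the Araki--Lieb interpretation are accurate context but not needed for the proof.
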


\begin{proof}
From proposition \ref{prop:Delta upper bound} and
\ref{prop:3-Delta lower bound}, we have 
\begin{equation}
S_{A}+S_{B}+S_{C}-S_{ABC}\leq\Delta_{P}(A:B:C)\leq S_{A}+S_{B}+S_{AB}.
\end{equation}
Therefore $S_{C}-S_{ABC}=S_{AB}$ leads $\Delta_{P}(A:B:C)=S_{A}+S_{B}+S_{AB}$. 
\end{proof}
\begin{cor}
\label{cor:Delta for SSA saturation}For a class of tripartite states
$\rho_{ABC}$ that saturate both of the two forms of the strong subadditivity,
$S_{A}+S_{C}=S_{AB}+S_{BC}$ and $S_{B}+S_{C}=S_{AB}+S_{AC}$, then
we have $\Delta_{P}(A:B:C)=S_{A}+S_{B}+S_{AB}$. 
\end{cor}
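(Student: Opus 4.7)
The plan is to sandwich $\Delta_P(A:B:C)$ between the upper bound from Proposition \ref{prop:Delta upper bound} and one of the lower bounds from Proposition \ref{prop:3-Delta lower bound}, mimicking the argument already used for Corollary \ref{cor:Delta for SA saturation}. The only twist is that here we must use the \emph{second} expression inside the max of Proposition \ref{prop:3-Delta lower bound}, namely $\Delta_P(A:B:C)\geq 2(S_A+S_B+S_C)-S_{AB}-S_{BC}-S_{CA}$, rather than the $I(A:B:C)$ form used in the SA-saturation case. The upper bound side is immediate: Proposition \ref{prop:Delta upper bound} already gives $\Delta_P(A:B:C)\leq S_A+S_B+S_{AB}$, and also the permuted variants, but we will only need this one.

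For the matching lower bound, the plan is to use the two SSA saturation hypotheses as substitutions. From $S_A+S_C=S_{AB}+S_{BC}$ one gets $S_{BC}=S_A+S_C-S_{AB}$, and from $S_B+S_C=S_{AB}+S_{AC}$ one gets $S_{CA}=S_B+S_C-S_{AB}$. Plugging these into $2(S_A+S_B+S_C)-S_{AB}-S_{BC}-S_{CA}$ and collecting terms should collapse everything to $S_A+S_B+S_{AB}$, matching the upper bound exactly. Equality of $\Delta_P(A:B:C)$ with this quantity then follows at once.

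I expect no substantive obstacle here, since both the upper and lower bounds are already in hand and the rest is bookkeeping. The only thing one has to notice is that the hypotheses are tailored precisely to the \emph{second} (tripartite-mutual-information) bound of Proposition \ref{prop:3-Delta lower bound}, not to the first; using the first bound $\Delta_P\geq S_A+S_B+S_C-S_{ABC}$ would not close the gap without further assumption on $S_{ABC}$. Once the correct lower bound is chosen, the proof is a one-line algebraic simplification, and the corollary can be stated in essentially the same two-line format as Corollary \ref{cor:Delta for SA saturation}.
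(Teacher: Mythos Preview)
Your proposal is correct and follows essentially the same approach as the paper: sandwich $\Delta_P(A:B:C)$ between the upper bound $S_A+S_B+S_{AB}$ from Proposition~\ref{prop:Delta upper bound} and the second lower bound $2(S_A+S_B+S_C)-S_{AB}-S_{BC}-S_{CA}$ from Proposition~\ref{prop:3-Delta lower bound}, then use the two SSA saturation conditions to collapse the lower bound to $S_A+S_B+S_{AB}$. The paper's only cosmetic difference is that it rewrites the lower bound as $S_A+S_B+S_{AB}+(S_A+S_C-S_{AB}-S_{BC})+(S_B+S_C-S_{AB}-S_{AC})$ so that the two parenthesized terms visibly vanish under the hypotheses, whereas you substitute for $S_{BC}$ and $S_{CA}$ directly---the algebra is identical.
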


\begin{proof}
From proposition \ref{prop:Delta upper bound} and
\ref{prop:3-Delta lower bound}, we have 
\begin{equation}
2(S_{A}+S_{B}+S_{C})-S_{AB}-S_{BC}-S_{CA}\leq\Delta_{P}(A:B:C)\leq S_{A}+S_{B}+S_{AB},
\end{equation}
where the lower bound can be expressed as
\begin{align}
 & 2(S_{A}+S_{B}+S_{C})-S_{AB}-S_{BC}-S_{CA}\nonumber \\
= & S_{A}+S_{B}+S_{AB}+(S_{A}+S_{C}-S_{AB}-S_{BC})+(S_{B}+S_{C}-S_{AB}-S_{AC}).
\end{align}
Thus if the two strong subadditivity are simultaneously saturated,
we get $\Delta_{P}(A:B:C)=S_{A}+S_{B}+S_{AB}$.
\end{proof}
We also provide another lower bound on $\Delta_{P}$ in terms
of bipartite entanglement of purification. 
\begin{prop}\label{prop:Delta>3Ep}
$\Delta_{P}$ is bounded from below by 
\begin{equation}
\Delta_{P}(A_{1}:\cdots:A_{n})\geq\sum_{i=1}^{n}E_{P}(A_{i}:A_{1}\cdots A_{i-1}A_{i+1}\cdots A_{n}).
\end{equation}
\end{prop}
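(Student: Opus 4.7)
The plan is to take an optimal purification for $\Delta_P$ and reinterpret it as a particular (typically non-optimal) purification for each of the bipartite entanglement-of-purification terms appearing on the right-hand side. Concretely, let $\ket{\psi}_{A_{1}A_{1}'\cdots A_{n}A_{n}'}$ achieve the minimum in the definition of $\Delta_{P}(A_{1}:\cdots:A_{n})$, so that
\begin{equation}
\Delta_{P}(A_{1}:\cdots:A_{n}) = \sum_{i=1}^{n} S_{A_{i}A_{i}'}.
\end{equation}

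For each fixed $i$, I would regroup the total Hilbert space into the bipartition $A_{i}\,|\,\tilde{A}_{i}$ where $\tilde{A}_{i} := A_{1}\cdots A_{i-1}A_{i+1}\cdots A_{n}$, and view $\ket{\psi}$ as a purification of the bipartite reduced state $\rho_{A_{i}\tilde{A}_{i}} = \rho_{A_{1}\cdots A_{n}}$, with $A_{i}'$ serving as the ancilla attached to $A_{i}$ and $\tilde{A}_{i}' := A_{1}'\cdots A_{i-1}'A_{i+1}'\cdots A_{n}'$ as the ancilla attached to $\tilde{A}_{i}$. Since $E_{P}(A_{i}:\tilde{A}_{i})$ is defined as a minimum of $S_{A_{i}A_{i}'}$ over all such bipartite purifications, this particular purification gives an upper bound
\begin{equation}
E_{P}(A_{i}:\tilde{A}_{i}) \leq S_{A_{i}A_{i}'}.
\end{equation}

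Summing this inequality over $i=1,\dots,n$ and using the displayed equality for $\Delta_{P}$ yields
\begin{equation}
\sum_{i=1}^{n} E_{P}(A_{i}:A_{1}\cdots A_{i-1}A_{i+1}\cdots A_{n}) \leq \sum_{i=1}^{n} S_{A_{i}A_{i}'} = \Delta_{P}(A_{1}:\cdots:A_{n}),
\end{equation}
which is exactly the claimed bound.

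There is essentially no obstacle here beyond keeping the bookkeeping of ancillary factors consistent: the key observation is simply that any purification of the full multipartite state automatically furnishes a purification for each of the $n$ bipartite splits $A_{i}\,|\,\tilde{A}_{i}$, so the bipartite minima must individually lie below the corresponding $S_{A_{i}A_{i}'}$ summands. No entropy inequality beyond the definition of $E_{P}$ is needed; in particular, the argument works in arbitrary finite-dimensional quantum systems and does not rely on strong subadditivity or any holographic input.
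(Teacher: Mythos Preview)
Your proof is correct and is essentially the same as the paper's. The paper phrases it as $\min_{\ket{\psi}}\sum_i S_{A_iA_i'} \geq \sum_i \min_{\ket{\psi}} S_{A_iA_i'} = \sum_i E_P(A_i:\tilde{A}_i)$, which is exactly your observation that the single optimal purification for $\Delta_P$ serves as a candidate purification for each bipartite split.
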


\begin{proof}
Let us prove it for tripartite cases for simplicity. For a state $\rho_{ABC}$,
we have 
\begin{align}
\Delta_{P}(A:B:C) & =\min_{\ket{\psi}_{AA'BB'CC'}}[S_{AA'}+S_{BB'}+S_{CC'}]\nonumber \\
 & \geq\min_{\ket{\psi}_{AA'BB'CC'}}S_{AA'}+\min_{\ket{\psi}_{AA'BB'CC'}}S_{BB'}+\min_{\ket{\psi}_{AA'BB'CC'}}S_{CC'}\nonumber \\
 & =E_{P}(A:BC)+E_{P}(B:CA)+E_{P}(C:AB),
\end{align}
then the bound follows.
\end{proof}

\section{Holography: multipartite entanglement wedge cross section\label{sec3}}

In this section, we define a multipartite generalization of entanglement
wedge cross-section introduced in \citep{TU:HEoP,NDHZS:HEoP}, motivated
by the tensor network description of AdS geometry \citep{Swingle:TensorNetwork,MiyajiTakayanagi:SurfaceStateCorrespondence,MTW:OptimizationofPathIntegral1,CKMTW:OptimizationofPathIntegral2}. 

\subsection{Definition}

We start by setting our conventions in holography. To compute entanglement entropy in quantum field theories, we often choose a (maybe disconnected) subsystem $A$ on a time slice, and
the Hilbert space of the field theory is factorized into ${\cal H}_{tot}={\cal H}_{A}\otimes{\cal H}_{A^{c}}$.
Then the entanglement entropy for subsystem $A$ in a state $\rho_{tot}$,
is defined as the von Neumann entropy of the reduced density matrix
$\rho_{A}$, 
\begin{equation}
S_{A}:=-\text{Tr}\rho_{A}\log\rho_{A}\ ,\quad\rho_{A}=\text{Tr}_{A^{c}}\rho_{tot}\ .
\end{equation}
In the AdS/CFT correspondence, the holographic entanglement entropy
formula \citep{RT:RT-formula,HRT:HRT-formula} tells us how to calculate
entanglement entropy in dual gravity side. Consider a state in
$d$-dimensional holographic CFT which has a classical $d+1$ dimensional
gravity dual. In the present paper, we will restrict ourselves to static cases\footnote{Even though we can define a covariant version of multipartite entanglement
wedge cross-section in a similar way for $E_{W}$ \citep{TU:HEoP},
the holographic proofs of its entropic inequalities for more than
4-partite cases is not straightforward \citep{RW:MaximinisNotEnough}.
We leave it as a future problem.}. The dual gravity solution of the given state $\rho_{tot}$ will
be a canonical time slice $M$ in gravity side. To compute the entanglement
entropy for a chosen subsystem $A\subset\partial M$, we are looking
for a $d-1$-dimensional surface $\Gamma_{A}$ in $M$ with minimal
area, under the conditions that $\partial\Gamma_{A}=\partial A$ and $\Gamma_{A}$
is homologous to $A$. Then the holographic entanglement entropy is determined
by the minimal area\footnote{We work at the leading order of large $N$ limit
through the whole of present paper.},
\begin{equation}
S_{A}=\frac{\text{Area}(\Gamma_{A}^{min})}{4G_{N}}\ .\label{EE}
\end{equation}

Let us start to define our main interest, i.e. a multipartite generalization
of entanglement wedge cross-section $\Delta_{W}$. We mostly focus
on tripartite case for simplicity, though the generalization to more
partite cases is rather straightforward. We take subsystems $A$,
$B$ and $C$ on the boundary $\partial M$. In general $\rho_{ABC}$
is a mixed state. Then one can compute the holographic entanglement
entropy $S_{A},S_{B},S_{C}$ and also $S_{ABC}$ following (\ref{EE}).
The corresponding minimal surfaces are denoted by $\Gamma_{A}^{min}$,
$\Gamma_{B}^{min}$, $\Gamma_{C}^{min}$, $\Gamma_{ABC}^{min}$, respectively.
The entanglement wedge $M_{ABC}$ \citep{CKNR:EntanglementWedge,Wall:Maximinsurfaces,HHLR:EntanglementWedge}
is defined as a region of $M$ \footnote{More precisely, we consider a constant time slice of entanglement
wedge and call it also entanglement wedge, while the former is codimension-0
and the latter is codimension-1.} with boundary $A,B,C$ and $\Gamma_{ABC}^{min}$: 
\begin{equation}
\partial M_{ABC}=A\cup B\cup C\cup\Gamma_{ABC}^{min}\ .
\end{equation}
Notice that $M_{ABC}$ gets disconnected when some of $A,B,C$ or
all of them are decoupled. Also note that $\partial M_{ABC}$ may
include bifurcation surfaces in the bulk such as in AdS black hole
geometry.

Next, we divide arbitrarily the boundary $\partial M_{ABC}$ (not
$M_{ABC}$ itself) into three parts $\widetilde{A},\widetilde{B},\widetilde{C}$
so that they satisfy 
\begin{equation}
\widetilde{A}\cup\widetilde{B}\cup\widetilde{C}=\partial M_{ABC}\ ,\label{constraint1}
\end{equation}
and 
\begin{equation}
A\subset\widetilde{A}\ ,B\subset\widetilde{B}\ ,C\subset\widetilde{C}\ .\label{constraint2}
\end{equation}
The boundary of $\widetilde{A},\widetilde{B},\widetilde{C}$ is denoted by ${\cal D}_{ABC}$. Regarding each $\widetilde{A},\widetilde{B},\widetilde{C}$ as a subsystem
of a geometric pure state, we can calculate 
\begin{equation}
S_{\widetilde{A}}+S_{\widetilde{B}}+S_{\widetilde{C}},\label{sum}
\end{equation}
by using holographic entanglement entropy formula (\ref{EE}). This
is performed by finding a minimal surface $\Sigma_{ABC}^{min}$ that
consists of three parts $\Sigma_{A}$, $\Sigma_{B}$, $\Sigma_{C}$,
which share the boundary ${\cal D}_{ABC}$, such that 
\begin{equation}
\Sigma_{ABC}^{min}=\Sigma_{A}\cup\Sigma_{B}\cup\Sigma_{C}\ ,\quad\partial\Sigma_{ABC}^{min}={\cal D}_{ABC}\ ,\label{bounded}
\end{equation}
and 
\begin{equation}
\Sigma_{A,B,C}~~\text{is homologous to}~~\widetilde{A},\widetilde{B},\widetilde{C}~~\text{inside}~~M_{ABC}\ .
\end{equation}
Since $\partial M_{ABC}$ is codimention-2, the surfaces ${\cal D}_{ABC}$
which plays the role of the division of $\partial M_{ABC}=\widetilde{A}\cup\widetilde{B}\cup\widetilde{C}$,
is codimension-3. In the case of AdS$_{3}$/CFT$_{2}$, ${\cal D}_{ABC}$
is in general three separated points on $\Gamma_{ABC}^{min}$, see
Fig \ref{fig:MEoP2}, \ref{fig:MEoP2BH}.

\begin{figure}[H]
\centering{}\includegraphics[scale=0.33]{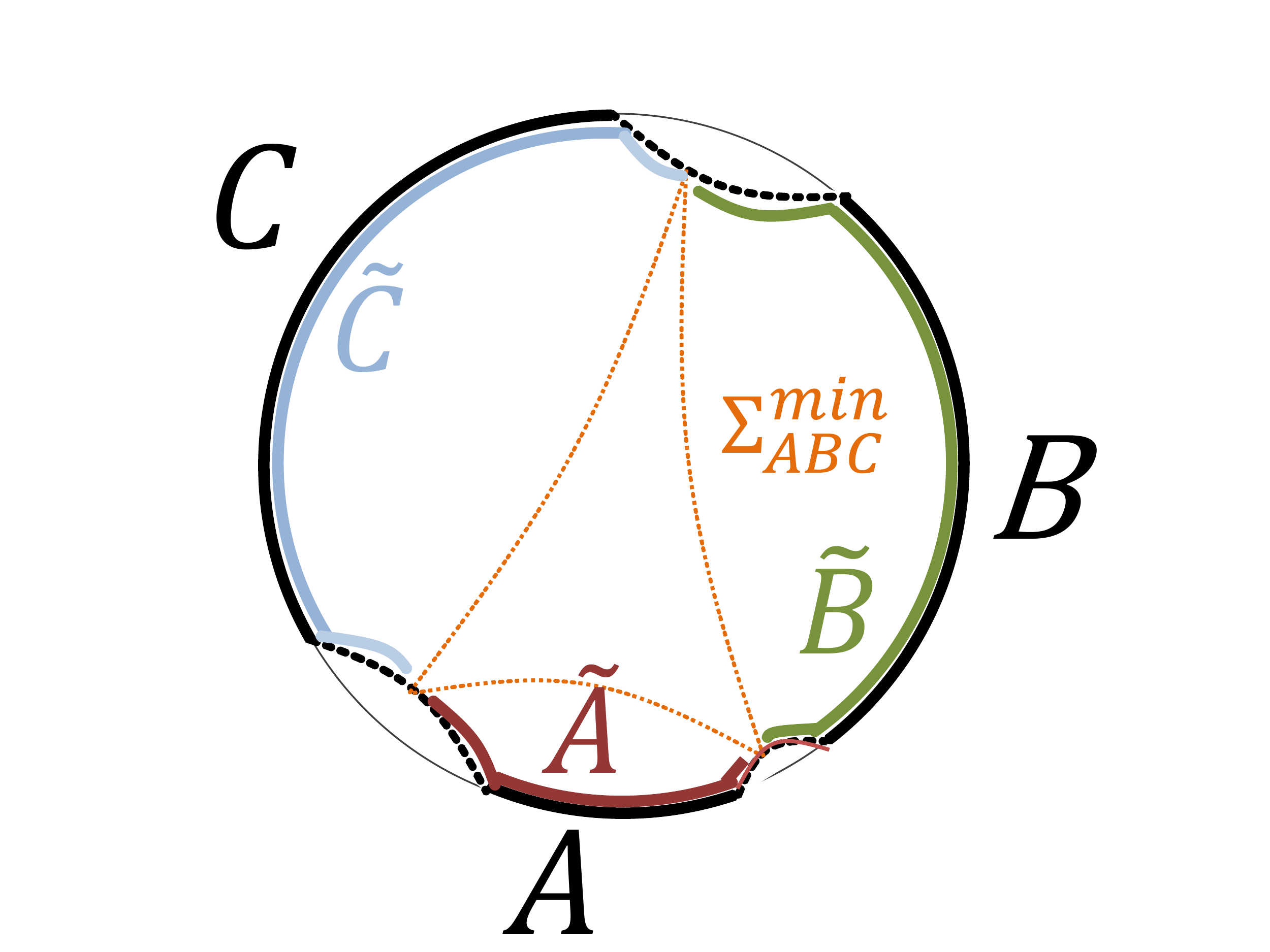}\caption{\label{fig:MEoP2} An example of tripartite entanglement wedge cross-section.
The black bold dashed lines represents the minimal surface $\Gamma_{ABC}^{min}$,
giving a part of the boundary of $M_{ABC}$. The yellow thin dashed lines
represents $\Sigma_{ABC}^{min}$ whose area (divided by 4$G_{N}$)
is $\Delta_{W}$.}
\end{figure}
\begin{figure}[H]
\centering{}\includegraphics[scale=0.33]{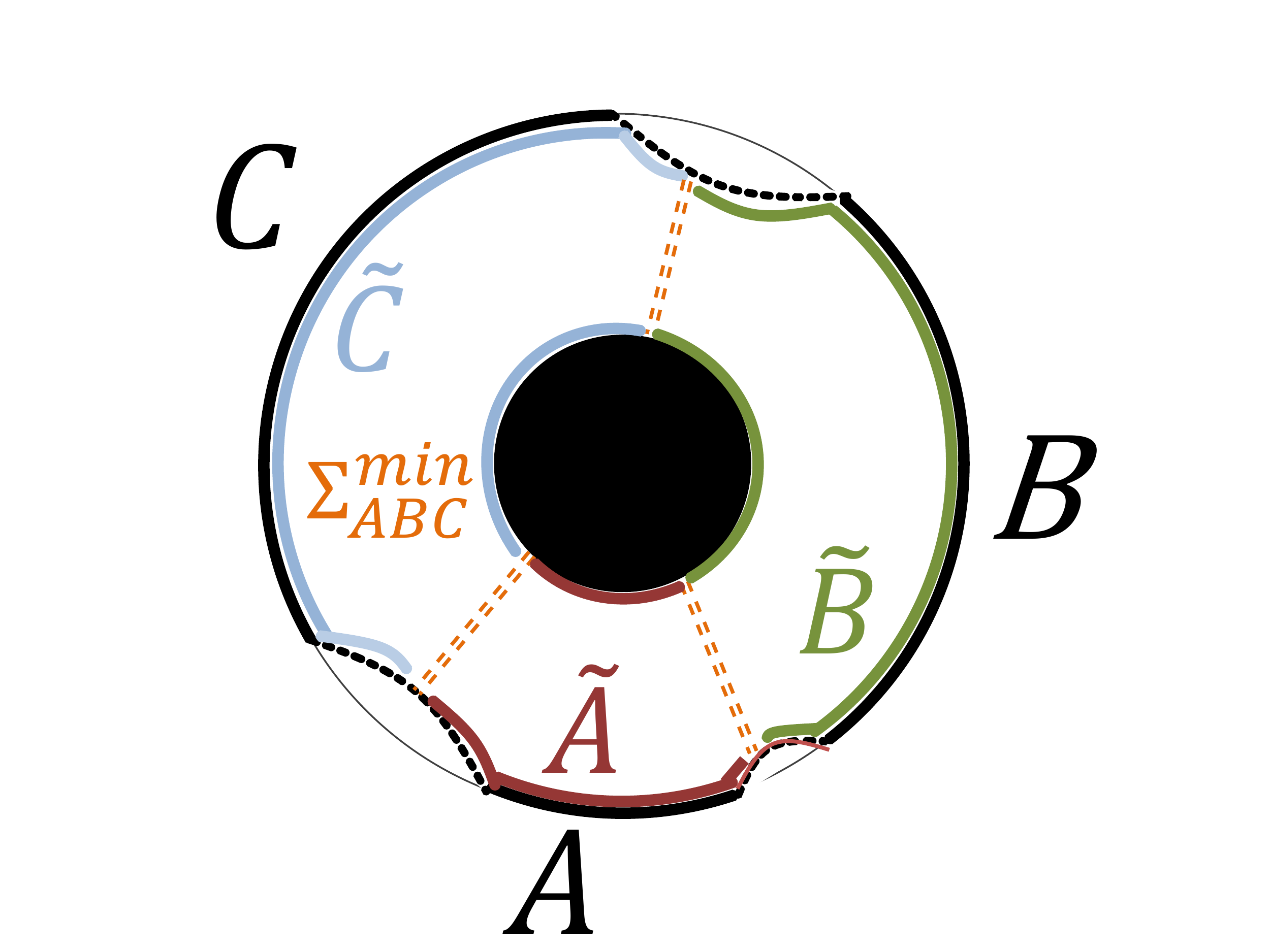}\caption{\label{fig:MEoP2BH} An example of tripartite entanglement wedge cross-section
in a black hole geometry. Each surface of $\Sigma_{ABC}^{min}$
is doubled.}
\end{figure}

Finally we minimize the area of $\Sigma_{ABC}^{min}$ over all possible
divisions $\widetilde{A},\widetilde{B},\widetilde{C}$ that satisfy
the conditions (\ref{constraint1}) and (\ref{constraint2}). This
gives now a quantity we call the multipartite entanglement wedge cross
section 
\begin{equation}
\Delta_{W}(\rho_{ABC}):=\min_{\widetilde{A},\widetilde{B},\widetilde{C}}\left[\frac{\text{Area}(\Sigma_{ABC}^{min})}{4G_{N}}\right]\ .\label{defDeltaW}
\end{equation}
For $n$-partite boundary subsystems, in general, the multipartite
entanglement wedge cross-section is defined in the same manner. Notice
that when it is reduced to the bipartite case, this definition is
actually twice of the bipartite entanglement wedge cross-section defined
in \citep{TU:HEoP,NDHZS:HEoP}. We sometimes write $\Delta_{W}(\rho_{ABC})=\Delta_{W}(A:B:C)$
to clarify a way of partition of subsystems.

In summary, $\Delta_{W}$ computes the multipartite cross-sections
of the entanglement wedge $M_{ABC}$ and it is a natural generalization
of the bipartite entanglement wedge cross-section. This can be used
as a total measure of how strongly multiple parties are holographically
connected with each other. Below we study the properties of $\Delta_{W}$.

\subsection{Properties of multipartite entanglement wedge cross-section and
the conjecture $\Delta_{W}=\Delta_{P}$}

In the following, we investigate holographic properties of $\Delta_{W}$,
inspired by those of $\Delta_{P}$. To avoid unnecessary complexity,
we mostly consider tripartite case only, and it should be understood that the properties are easily generalized for $n$-partite cases in somewhat trivial ways unless otherwise emphasized. \\

First, if $\rho_{ABC}$ is pure, from the definition (\ref{bounded}),
$\Sigma_{ABC}^{min}$ coincides with $\Gamma_{A}^{min}\cup\Gamma_{B}^{min}\cup\Gamma_{C}^{min}$.
Therefore $\Delta_{W}$ is equal to the sum of the entanglement entropy
of $A$, $B$ and $C$: 
\begin{equation}
\Delta_{W}(A:B:C)=S_{A}+S_{B}+S_{C}.
\end{equation}

As we mentioned above, for a partly decoupled entanglement wedges
i.e. if $M_{ABC}=M_{AB}\bigsqcup M_{C}$, where $\bigsqcup$ denote
that the geometries $M_{AB}, M_{C}$ are totally separated,
then $\Delta_{W}$ is reduced to (twice of) entanglement wedge
cross-section. This clearly leads that $\Delta_{W}=0$ if
and only if the entire entanglement wedge is totally decoupled $M_{A_{1}\cdots A_{n}}=\bigsqcup_{i=1}^{n}M_{A_{n}}$
for multipartite setups.\\

One can easily show that $\Delta_{W}$ decreases when we reduce one
of the subregions in $A,B,C\equiv C_{1}\cup C_{2}$: 
\begin{equation}
\Delta_{W}(A:B:C_{1}\cup C_{2})\geq\Delta_{W}(A:B:C_{1})\ ,
\end{equation}
by using the so-called entanglement wedge nesting, 
\begin{equation}
M_{X}\subset M_{XY},
\end{equation}
which holds for any boundary subregions $X,\ Y$ \citep{CKNR:EntanglementWedge,Wall:Maximinsurfaces,HHLR:EntanglementWedge}.\\

One can also easily show an upper bound of $\Delta_{W}$ by a graph
proof (Fig.\ref{fig:MEoP6}): 
\begin{equation}
\Delta_{W}(A:B:C)\leq S_{A}+S_{B}+S_{AB}\ ,
\end{equation}
By commuting $A,B,C$, one can further get 
\begin{equation}
\Delta_{W}(A:B:C)\leq\min[S_{A}+S_{B}+S_{AB},S_{B}+S_{C}+S_{BC},S_{A}+S_{C}+S_{AC}].
\end{equation}

\begin{figure}[H]
\centering{}\includegraphics[scale=0.33]{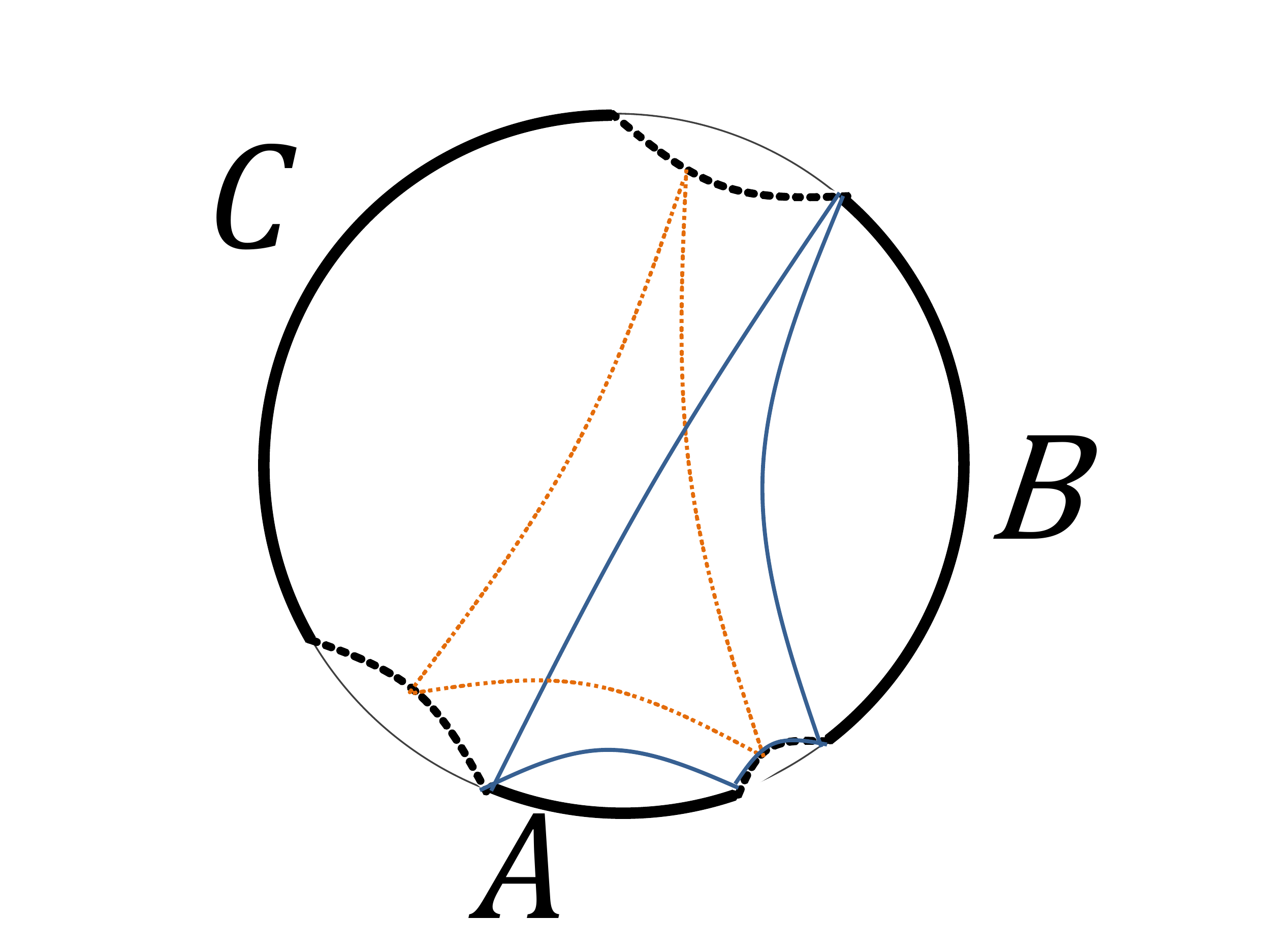}\caption{\label{fig:MEoP6} The proof of an upper bound of $\Delta_{W}$. The
sum of blue real lines is $S_{A}+S_{B}+S_{AB}$ and the sum of dashed
yellow lines are $\Delta_{W}(\rho_{ABC})$. Clearly $\Delta_{W}(\rho_{ABC})\protect\leq S_{A}+S_{B}+S_{AB}$ holds somewhat trivially since $S_{A}+S_{B}+S_{AB}$ has UV divergences
while $\Delta_{W}$ does not. When two of subsystems share the boundary,
$\Delta_{W}$ also diverges, but it is always weaker than $S_{A}+S_{B}+S_{AB}$
shown by a graph. }
\end{figure}

Furthermore, for tripartite setups, one can show two lower bounds
of $\Delta_{W}$ also by graph proofs (Fig.\ref{fig:MEoP3}, Fig.\ref{fig:MEoP3Y}): 
\begin{equation}
\Delta_{W}(A:B:C)\geq I(A:B:C)=S_{A}+S_{B}+S_{C}-S_{ABC}\ .
\end{equation}
\begin{equation}
\Delta_{W}(A:B:C)\geq I(A:B:C)+\tilde{I}_{3}(A:B:C)=2(S_{A}+S_{B}+S_{C})-S_{AB}-S_{BC}-S_{AC}\ .
\end{equation}
Note that, however, in holography we always have $I(A:B:C)\geq I(A:B:C)+\tilde{I}_{3}(A:B:C)$
because of negative tripartite information $\tilde{I}_{3}\leq0$ \citep{HHM:MonogamyofHMI}.
Therefore the former is always tighter.

Similarly, for $n$-partite setup, one can easily show that 
\begin{equation}
\Delta_{W}(A_{1}:\cdots:A_{n})\geq I(A_{1}:\cdots:A_{n}),
\end{equation}
by drawing graphs.

\begin{figure}[H]
\centering{}\includegraphics[scale=0.33]{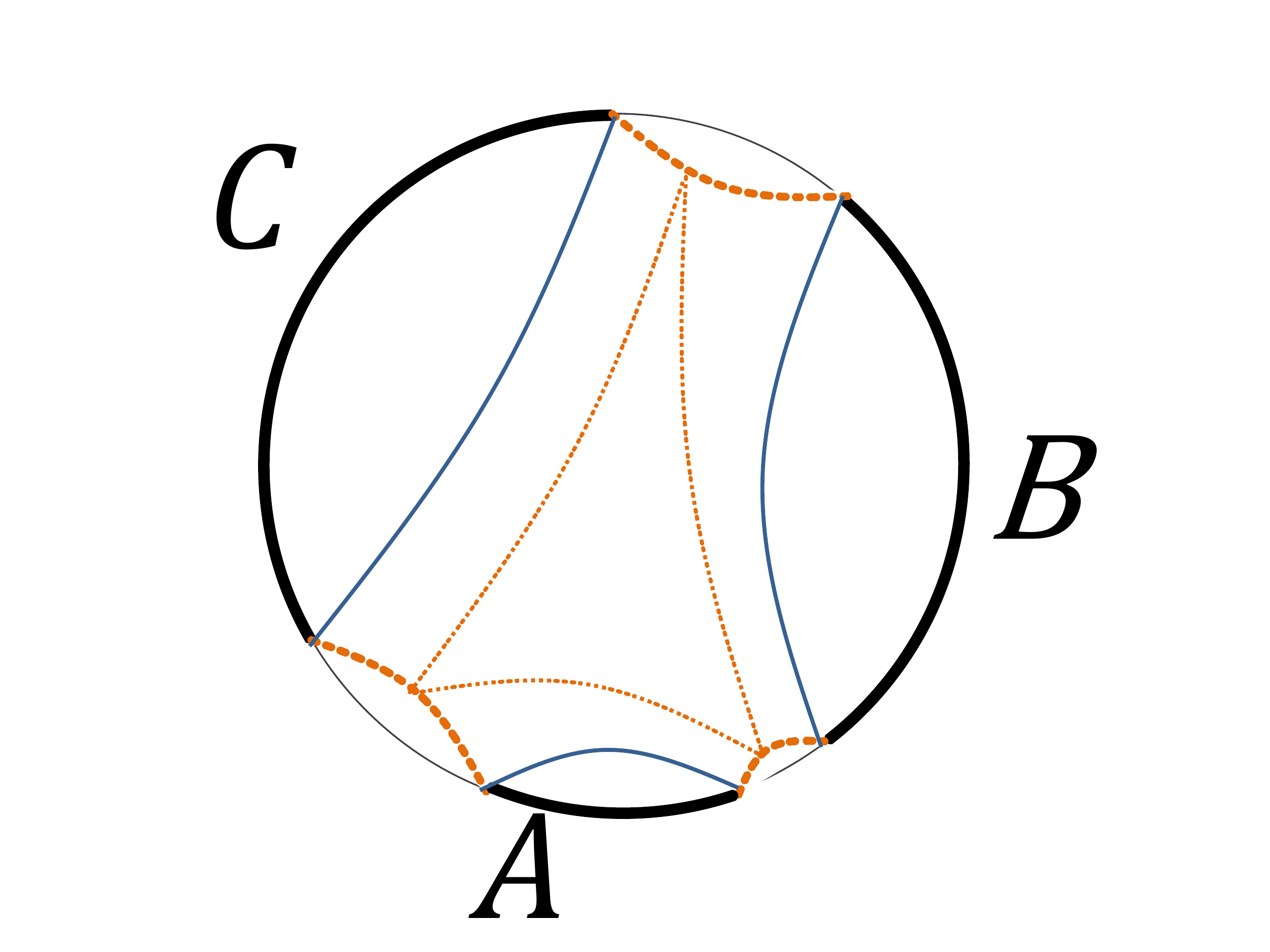}\caption{\label{fig:MEoP3} The proof of a lower bound of $\Delta_{W}$. The
sum of dashed yellow lines is $\Delta_{ABC}+S_{ABC}$ and the sum
of real blue lines are $S_{A}+S_{B}+S_{C}$. Clearly $\Delta_{W}(\rho_{ABC})+S_{ABC}\protect\geq S_{A}+S_{B}+S_{C}$
follows since the entanglement entropy are defined as minimal surfaces. }
\end{figure}
\begin{figure}[H]
\centering{}\includegraphics[scale=0.33]{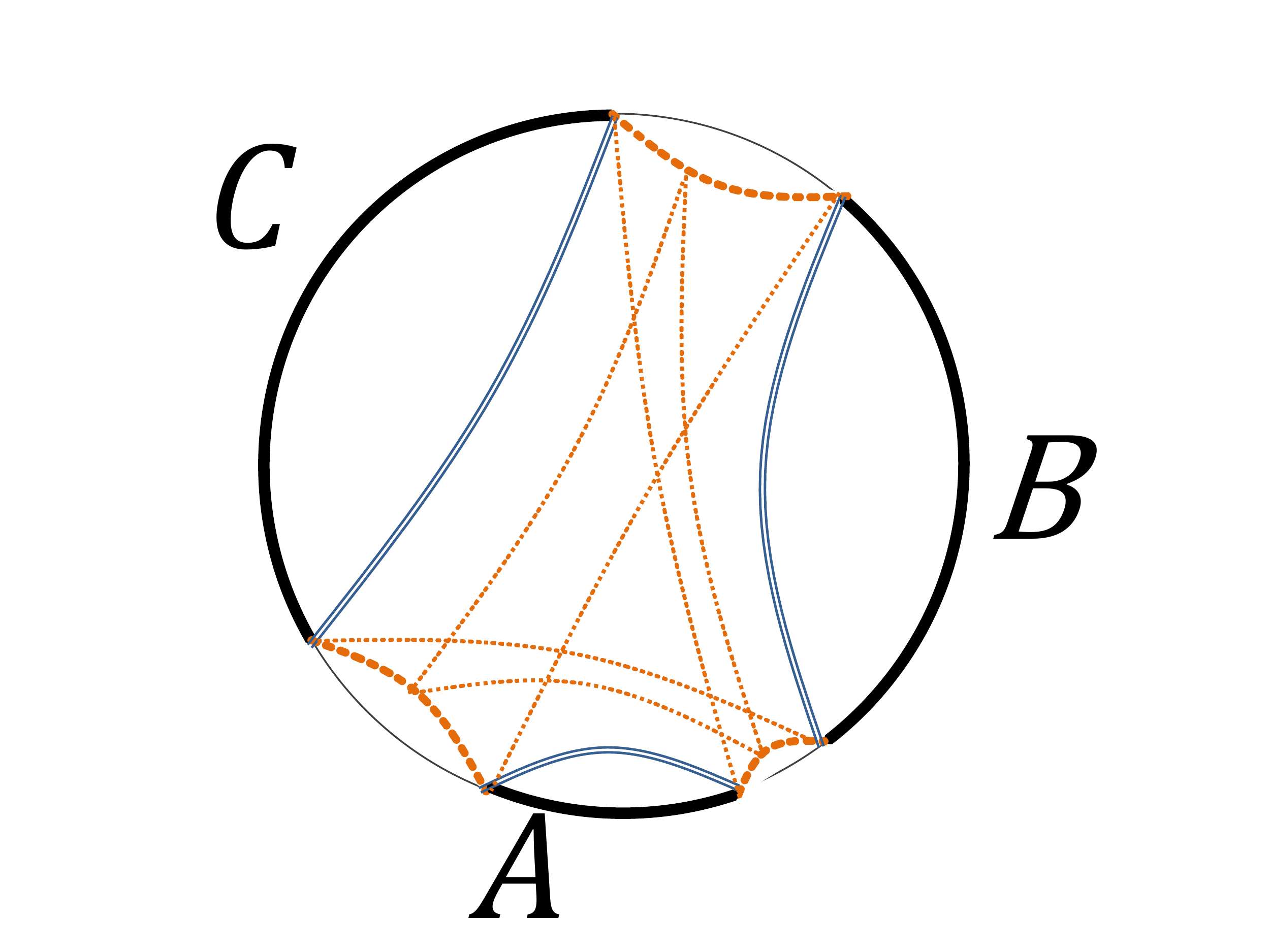}\caption{\label{fig:MEoP3Y} The proof of a lower bound of $\Delta_{W}$. The
sum of dashed yellow lines is $\Delta_{ABC}+S_{AB}+S_{BC}+S_{CA}$,
and the sum of real doubled blue lines are $2(S_{A}+S_{B}+S_{C})$.
Clearly $\Delta_{W}+S_{AB}+S_{BC}+S_{CA}\protect\geq 2(S_{A}+S_{B}+S_{C})$
follows, since the entanglement entropy are defined as minimal surfaces. }
\end{figure}

Note that the corollaries in the previous section automatically follows
for $\Delta_{W}$ from the above discussion, while we can also show them by drawing
graphs. We also note that the proposition \ref{prop:Delta>3Ep} for $\Delta_W$ can be easily shown by graphs. \\

One can view the above properties are the multipartite generalization
of the holographic properties of bipartite entanglement wedge cross-section.
Motivated by the same properties of $\Delta_{W}$ and $\Delta_{P}$,
we now make a conjecture that, the multipartite entanglement wedge
cross-section $\Delta_{W}$ we defined in this section is nothing
but the holographic counterpart of the multipartite entanglement of
purification $\Delta_{P}$ we defined in the last section (at the leading order $O(N^{2})$): 
\begin{equation}
\Delta_{W}=\Delta_{P}\ .\label{conjecture}
\end{equation}

\subsection{Computation of $\Delta_{W}$ in pure AdS$_{3}$}

Now we compute $\Delta_{W}$ in the simple examples of AdS$_{3}$/CFT$_{2}$.
We work in Poincar\'e patch, and a static ground state of a CFT$_{2}$
on a infinite line is described by a bulk solution with the metric
\begin{equation}
ds^{2}=\frac{dx^{2}+dz^{2}}{z^{2}}\ ,\quad x\in(-\infty,+\infty),z\in[0,+\infty)\ .
\end{equation}
The three subsystems we choose are the intervals $A=[-b,-a-r]$, $B=[-a+r,a-r]$,
$C=[a+r,b]$, where $b>a>0$ and $r$ is relatively small compared
to both $a$ and $b$. We require that the entanglement wedge of $ABC$
is connected, as shown in Fig.\ref{fig:MEoP4}. Following the definition
of $\Delta_{W}$ given in (\ref{defDeltaW}), in this set up the problem
becomes to find a triangle type configuration with the minimal length of three connected geodesics, where the ending points
of the geodesics are located on 3 semi-circles separately, as shown
in Fig.\ref{fig:MEoP4}. Since we focus on the case of 3 intervals
$A,B,C$ which have a reflection symmetry $x\to-x$, the reasonable
minimal configuration should also keep the reflection symmetry. This
consideration reduces the problem to find a special angle $\theta$
such that the length of 3 geodesics is minimal. Then, the tripartite
entanglement wedge cross-section $\Delta_{W}$ is given by 
\begin{equation}
\Delta_{W}(A:B:C)=\min_{\theta}\left[\frac{L(\theta)}{4G_{N}}\right]\ .
\end{equation}

We obtained a compact formula of the length $L$ as a function of
$a,b,r$ and $\theta$, however this formula is rather complicated.
We instead show numerical $\theta$ dependence of $L$ as plotted in
Fig.\ref{fig:Plot1} and evaluate the special values of both $\theta$
and $L$ satisfying the minimal length condition for a given $a,b,r$.

It is also straightforward to check the properties of $\Delta_{W}$
studied before in this particular setup.

\begin{figure}[H]
\centering{}\includegraphics[scale=0.33]{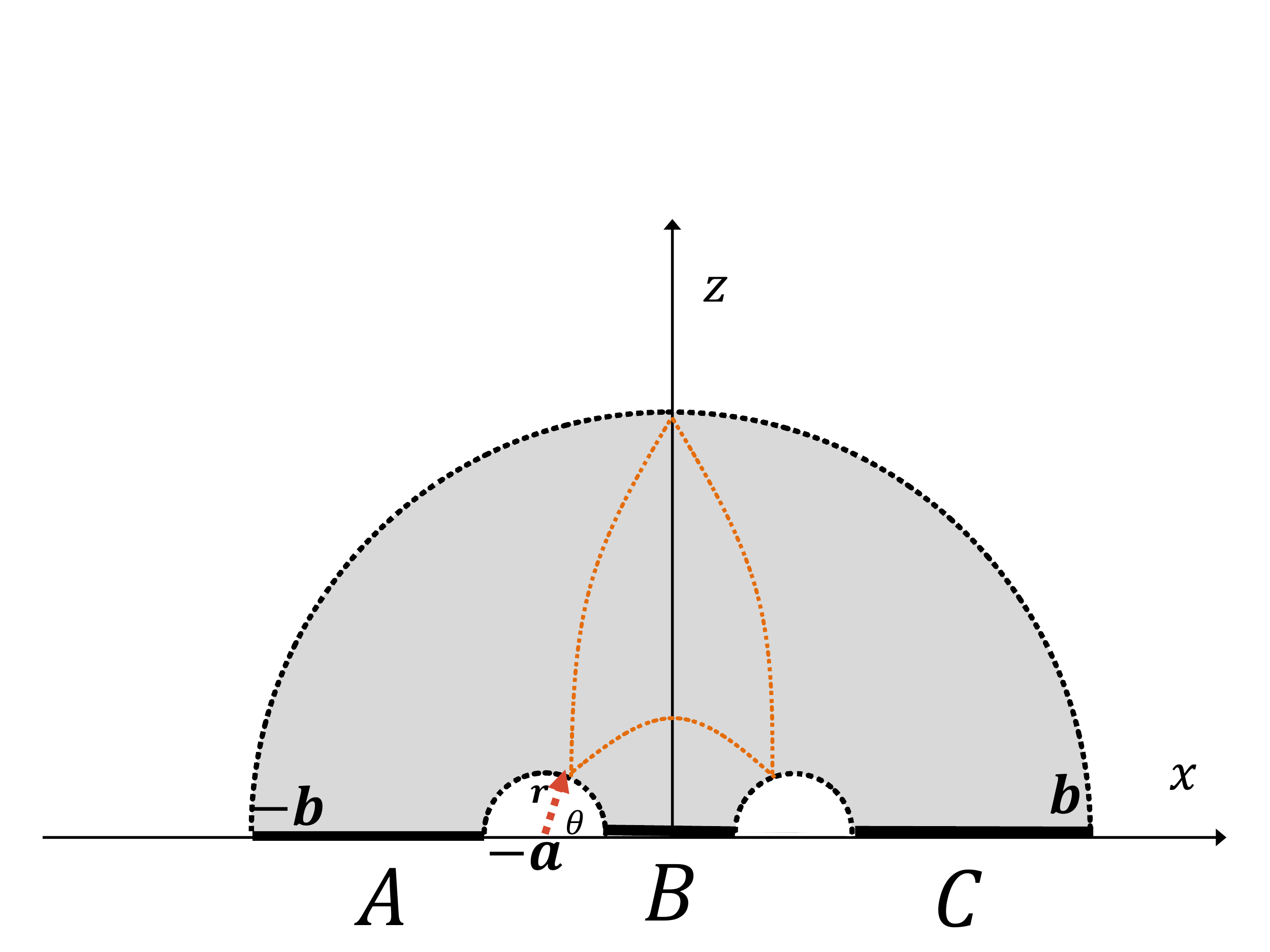}\caption{\label{fig:MEoP4} The computation of $\Delta_{W}$ in pure AdS$_{3}$. }
\end{figure}

\begin{figure}[H]
\centering{}\includegraphics[scale=0.6]{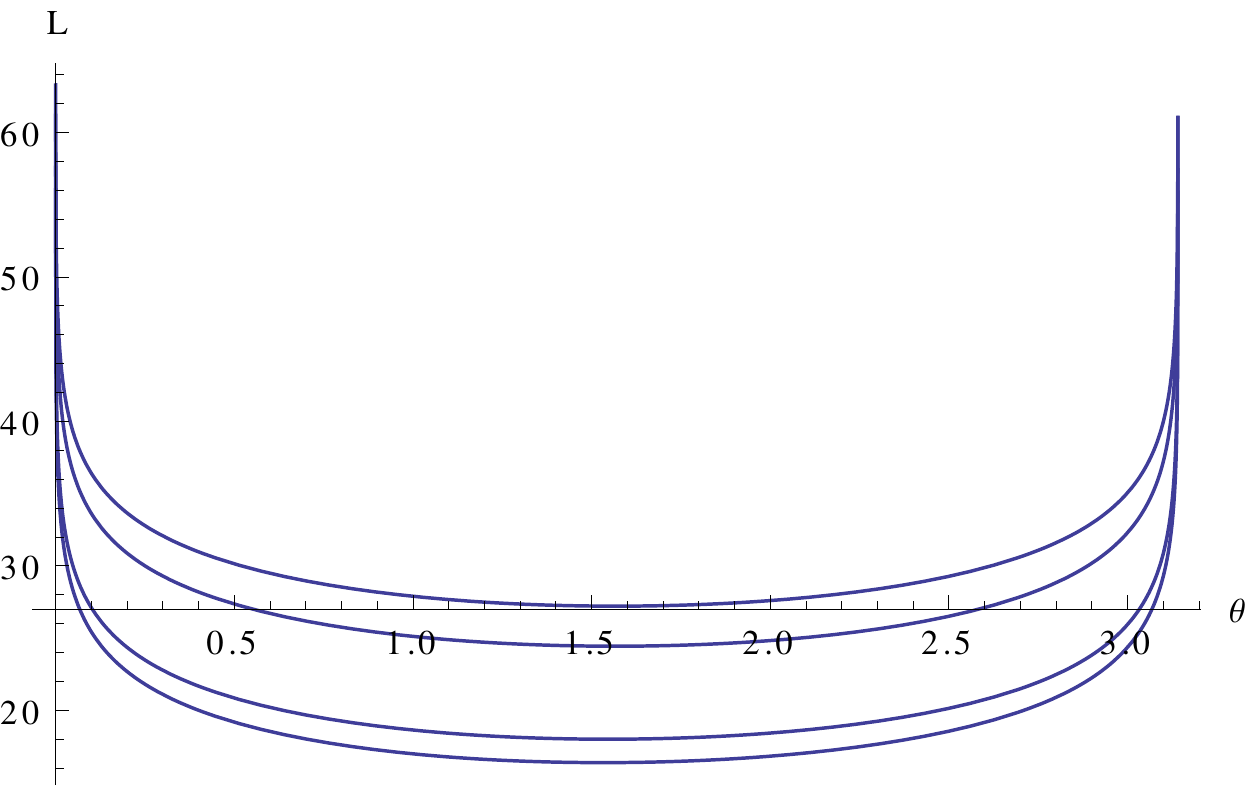}\caption{\label{fig:Plot1} $L-\theta$ plot in the computation of $\Delta_{W}$
with different parameters. From top to bottom, the parameters of $(a,b,r)$
are $(10,100,0.05),(10,100,0.1),(10,100,0.5),(10,100,0.75)$. $\Delta_{W}*4G_N$
and the the optimal value of $\theta$ are $(27.2046,\theta \to 1.56818)$ $(24.432,\theta\to1.5657),(17.9942,\theta\to1.54531),(16.3723,\theta\to1.53255)$,
respectively.}
\end{figure}

\subsection{Computation of $\Delta_{W}$ in BTZ black hole}

Now we turn to the BTZ black holes. A planar BTZ black hole describes
a $2$d CFT on a infinite line at finite temperature. The metric of
a fixed time slice of BTZ is given by 
\begin{equation}
ds^{2}=\frac{1}{z^{2}}\left(\frac{dz^{2}}{f(z)}+dx^{2}\right)\ ,\quad f(z)=1-\frac{z^{2}}{z_{H}^{2}}\ ,
\end{equation}
where the temperature is related to the horizon by $\beta=2\pi z_{H}$.
For simplicity we choose 3 subsystems $A$, $B$, $C$ as intervals
$[-\ell,0]$, $[0,\ell]$ and the remaining part of the infinite line,
respectively.

\begin{figure}[H]
\centering{}\includegraphics[scale=0.33]{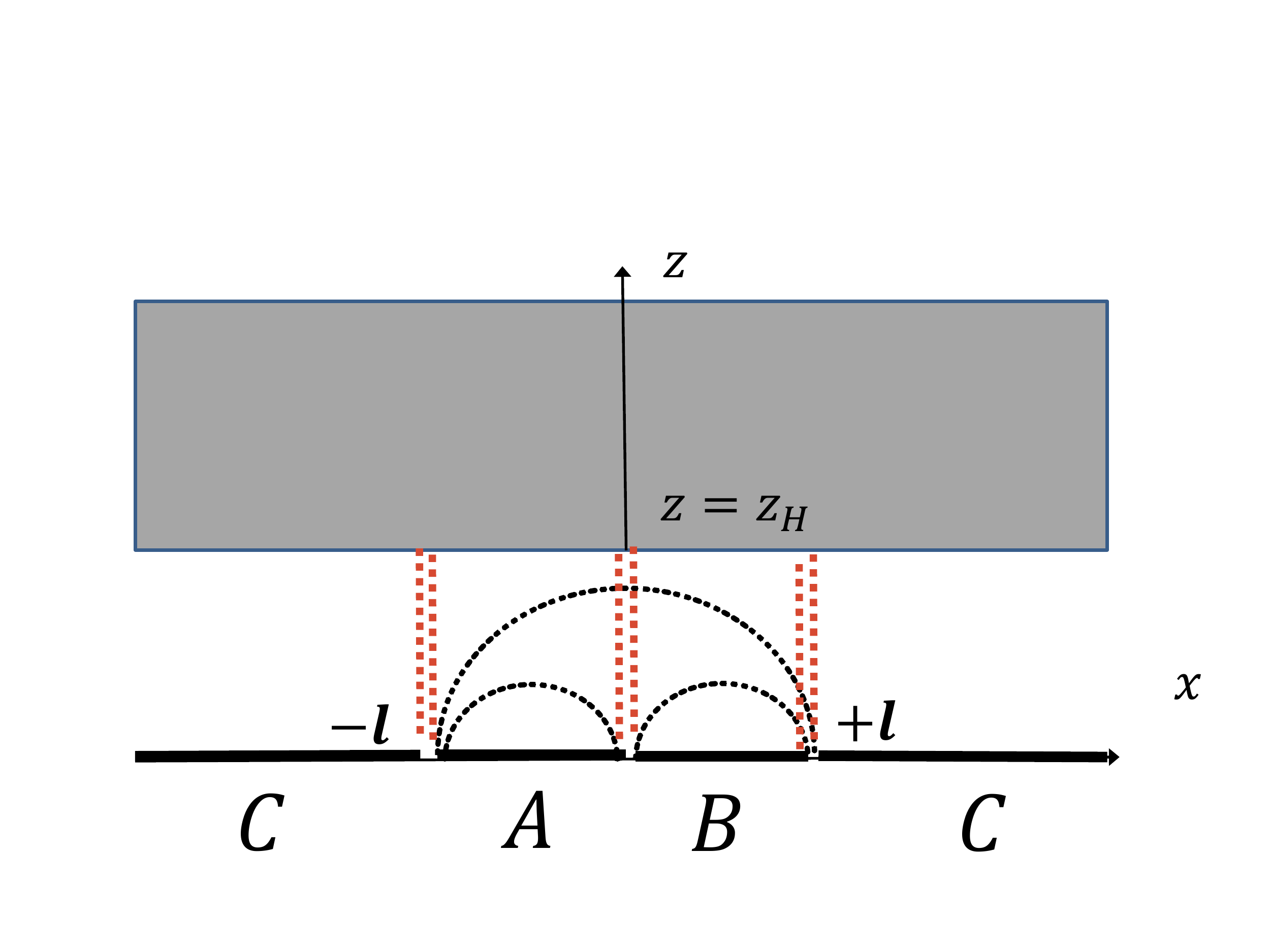}\caption{\label{fig:MEoP5} The computation of $\Delta_{W}$ in BTZ. }
\end{figure}

As studied in \citep{TU:HEoP}, the geodesic length between the boundary
and the horizon is 
\begin{equation}
L_{1}=\log\frac{\beta}{\pi\epsilon}\ ,
\end{equation}
where $\epsilon$ is the UV cutoff. The geodesic length between $(-\ell,0)$
and $(0,0)$ is \footnote{It is the same as the geodesic length between $(0,0)$ and $(\ell ,0)$.}
\begin{equation}
L_{2}=2\log\frac{\beta\sinh(\pi\ell/\beta)}{\pi\epsilon}\ ,
\end{equation}
and the geodesic length between $(-\ell,0)$ and $(\ell,0)$ is
\begin{equation}
L_{3}=2\log\frac{\beta\sinh(2\pi\ell/\beta)}{\pi\epsilon}\ .
\end{equation}
At high temperature, $\Sigma_{ABC}$ consists of six short lines of
length $L_{1}$, so the tripartite cross-section is given by 
\begin{equation}
\Delta_{W}=6L_{1}=6\log\frac{\beta}{\pi\epsilon}=:A^{(1)}\ .
\end{equation}
At low temperature, $\Sigma_{ABC}$ consists of 3 geodesic lines connecting
$(-\ell,0)$, $(0,0)$ and $(\ell,0)$, so the $\Delta_{W}$ is given
by 
\begin{equation}
\Delta_{W}=2L_{2}+L_{3}=4\log\frac{\beta\sinh(\pi\ell/\beta)}{\pi\epsilon}+2\log\frac{\beta\sinh(2\pi\ell/\beta)}{\pi\epsilon}=:A^{(2)}\ .
\end{equation}
 The tripartite entanglement wedge cross-section $\Delta_{W}$ is
thus given by 
\begin{equation}
\Delta_{W}(A:B:C)=\frac{1}{4G_{N}}\min[A^{(1)},A^{(2)}]\ .
\end{equation}
We compare $A^{(1)}$ and $A^{(2)}$ and find the critical temperature
\begin{equation}
\beta_{*}=\frac{\log\sqrt{y_{*}}}{\pi\ell}\ ,
\end{equation}
where $y_{*}$ is the positive root of 
\begin{equation}
(y+y^{-1}-2)(y-y^{-1})=8\ .
\end{equation}
Furthermore, one can confirm that there are only two phases $A^{(1)},A^{(2)}$
separated by the above critical temperature. When $\beta<\beta_{*}$,
$A^{(1)}$ is favored and when $\beta>\beta_{*}$, $A^{(2)}$ is favored.

\section{Conclusion}

\label{sec5} In this paper, we defined a generalization of entanglement
of purification to multipartite states denoted by $\Delta_{P}$, and
proved its various properties focusing on bounds described
by other entropic quantities. We demonstrate that $\Delta_{P}$ satisfies
desired properties as a generalization of $E_{P}$ and provides an upper
bound on the multipartite mutual information introduced in \citep{YHHHOS:mEsq,AHS:mEsq}.
In particular, we show that the tripartite entanglement of purification
$\Delta_{P}(A:B:C)$ is bounded from below by a sum of two different generalizations of mutual information. We also show that
for a class of tripartite quantum states which saturate the subadditivity
or the strong subadditivity, there is a closed expression of $\Delta_{P}$
in terms of entanglement entropy.

Based on the holographic conjecture of entanglement of purification
\citep{TU:HEoP,NDHZS:HEoP}, we defined a generalization of entanglement
wedge cross-section as the holographic counterpart of the multipartite entanglement
of purification $\Delta_{W}$, and show that all the properties
$\Delta_{P}$ has are indeed satisfied by $\Delta_{W}$. It leads
us to propose a new conjecture $\Delta_{P}=\Delta_{W}$ at the leading
order $O(N^{2})$ of large $N$ limit. Alternatively speaking, this
implies the naive picture of purified geometry based on tensor network
description \citep{MiyajiTakayanagi:SurfaceStateCorrespondence,MTW:OptimizationofPathIntegral1,CKMTW:OptimizationofPathIntegral2} still
works for multipartite cases. As explicit examples, we calculated
$\Delta_{W}$ for several simple setups in AdS$_{3}$/CFT$_{2}$ including
pure AdS$_{3}$ and black hole geometry.


Several future questions are in order: First, proof of $\Delta_{W}$
properties in time-dependent background geometry. The properties in
this case are expected to be very useful in understanding the dynamical
process in quantum gravity systems, since many interesting physical
set up go beyond the bi-partite pattern. The entropic inequalities
for multipartite cases were actually discussed in \citep{RW:MaximinisNotEnough}
for covariant cases, there they argued that new techniques apart from
the maximin surfaces \citep{Wall:Maximinsurfaces} are needed to
prove them for 5 or more partite cases. Second, we expect that there
is an operational interpretation for $\Delta_{P}$ just like $E_{P}$
has, such as the one based on SLOCC for multipartite qubits. Third,
looking for the new properties satisfied by $\Delta_{W}$ but not always
by $\Delta_{P}$ is certainly interesting, because these are essentially
the new constraints on holographic states, such as the conjectured
strong superadditivity of entanglement of purification in holography.
We shall report the progress in future publications.

\section*{Acknowledgements}

We thank Arpan Bhattacharyya, Masahito Hayashi, Ling-Yan Hung, Tadashi Takayanagi, Kento Watanabe, Huang-Jun Zhu
for useful conversations and comments on the draft. KU is supported
by JSPS fellowships. KU is also supported by Grant-in-Aid for JSPS Fellows No.18J22888.

\appendix
\section{The multipartite squashed entanglement and holography\label{sec4}}

There have been a lot of measures of genuine quantum entanglement
proposed. In particular, the squashed entanglement \citep{Tucci:Esq,ChristandlWinter:SquashedEntanglement}
is the most promising measure of quantum entanglement for mixed states,
as it satisfies all known desirable properties e.g. additivity. The
squashed entanglement is defined by 
\begin{equation}
E_{sq}(\rho_{AB}):=\frac{1}{2}\inf_{\rho_{ABE}}I(A:B|E),\label{eq:Def Esq}
\end{equation}
where $I(A:B|E)=I(A:BE)-I(A:E)=S_{AE}+S_{BE}-S_{ABE}-S_E$ is the conditional mutual information,
and the minimization is taken over all possible extensions $\rho_{AB}={\rm Tr}_{E}[\rho_{ABE}]$. By taking $E$ is trivial we see that by definition $E_{sq}$ is less than or equal to half of the mutual information: $E_{sq}\leq I/2$.

In \citep{YHHHOS:mEsq,AHS:mEsq}, multipartite generalizations of
$E_{sq}$ were also introduced. To define the one we are interested, we first introduce a
conditional multipartite mutual information,
\begin{equation}
I(A_{1}:\cdots:A_{n}|E)=I(A_{1}:A_{2}|E)+I(A_{1}A_{2}:A_{3}|E)+\cdots+I(A_{1}\cdots A_{n-1}:A_{n}|E).
\end{equation}
Using this, the ($q$-)multipartite
squashed entanglement for $n$-partite state $\rho_{A_{1}\cdots A_{n}}$
is defined by \footnote{Here we also define the multipartite squashed entanglemet without $\frac{1}{2}$ prefactor following our convention for $\Delta_P$.}
\begin{equation}
E_{sq}^{q}(A_{1}:\cdots:A_{n}):=\inf_{\rho_{A_{1}\cdots A_{n}E}}I(A_{1}:\cdots:A_{n}|E),
\end{equation}
where the minimization is taken over all possible extensions of $\rho_{A_{1}\cdots A_{n}}$. 

Noting that a trivial extension $\rho_{A_{1}\cdots A_{n}E}=\rho_{A_{1}\cdots A_{n}}\otimes\ket{0}\bra{0}_{E}$
gives $I(A_{1}:\cdots:A_{n}|E)=I(A_{1}:\cdots:A_{n})$, we get a generic
bound,

\begin{equation}
E_{sq}^{q}(A_{1}:\cdots:A_{n})\leq I(A_{1}:\cdots:A_{n}).\label{eq:mEsq < mMI}
\end{equation}

Combining it with the proposition \ref{prop:n-partite Delta > MI}, we get
a generic order of three different measures of multipartite correlations. 
\begin{cor}
It holds for any $n$-partite quantum states that
\begin{equation}
E_{sq}^{q}(A_{1}:\cdots:A_{n})\leq I(A_{1}:\cdots:A_{n})\leq\Delta_{P}(A_{1}:\cdots:A_{n}).
\end{equation}
\end{cor}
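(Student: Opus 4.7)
The plan is almost entirely a chaining argument, since both inequalities in the corollary have already been established individually in the preceding material. For the right-hand inequality $I(A_1:\cdots:A_n)\leq\Delta_P(A_1:\cdots:A_n)$, I would simply invoke Proposition \ref{prop:n-partite Delta > MI}, which was proved by evaluating $\sum_i S_{A_iA_i'}$ on an optimal purification, recognizing it as $I(A_1A_1':\cdots:A_nA_n')$, and then using monotonicity of the multipartite mutual information under enlargement of subsystems to drop the ancillas.

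For the left-hand inequality $E_{sq}^q(A_1:\cdots:A_n)\leq I(A_1:\cdots:A_n)$, the strategy is to exhibit a particular extension that realizes the bound. The natural choice is the trivial extension $\rho_{A_1\cdots A_nE}=\rho_{A_1\cdots A_n}\otimes\ket{0}\bra{0}_E$, so that $E$ is decoupled from every subsystem. Under this choice, each bipartite conditional term $I(A_1\cdots A_k:A_{k+1}|E)$ appearing in the chain-rule definition of $I(A_1:\cdots:A_n|E)$ collapses to its unconditional counterpart $I(A_1\cdots A_k:A_{k+1})$, so that the full conditional multipartite mutual information equals $I(A_1:\cdots:A_n)$. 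Taking the infimum over extensions in the definition of $E_{sq}^q$ then yields the claimed inequality; this is precisely the content of \eqref{eq:mEsq < mMI}.

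Concatenating the two bounds gives the three-term inequality immediately. There is no genuine obstacle in the argument; the only subtlety worth spelling out is the reduction of the conditional multipartite mutual information to the unconditional one under a product extension, which I would verify termwise using the chain-rule expression of $I(A_1:\cdots:A_n|E)$ before invoking \eqref{eq:mEsq < mMI} and Proposition \ref{prop:n-partite Delta > MI} in sequence.
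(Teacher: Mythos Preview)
Your proposal is correct and follows essentially the same approach as the paper: the left inequality comes from plugging in the trivial product extension (which is exactly \eqref{eq:mEsq < mMI}), the right inequality is Proposition~\ref{prop:n-partite Delta > MI}, and the corollary is obtained by concatenation. Your added remark about verifying the chain-rule reduction termwise is a harmless elaboration of a step the paper states without detail.
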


Note that for any pure $n$-partite state these bounds are saturated
and we have $E_{sq}^{q}=I=\Delta_{P}$. Generally, $E_{sq}^{q}\leq\Delta_{P}$ is a desirable property, since $\Delta_{P}$ is expected
to be a measure of both quantum and classical correlations while $E_{sq}^{q}$
is only of quantum ones.

\subsection{Holographic counterpart of $E_{sq}$}

The definition of squashed entanglement \eqref{eq:Def Esq} is
similar to that of the entanglement of purification \eqref{eq:Def EoP}.
Both of them use a certain type of extension, indeed, purification is a special set of extension.
This observation motivates us to seek for a holographic counterpart of $E_{sq}$ in the same spirit
of $E_{P}$ or $\Delta_{P}$. 

Let us regard a time slice of AdS as a tensor network which describes a quantum state of CFT. In a gravity background with a tensor network description, one can define a pure or mixed state for any codimension two convex surface, called the surface/state correspondence \citep{MiyajiTakayanagi:SurfaceStateCorrespondence,MTW:OptimizationofPathIntegral1,CKMTW:OptimizationofPathIntegral2}. Let us now take use of this picture to study what the holographic counterpart of squashed entanglement could be if it indeed exists\footnote{We are grateful to Tadashi Takayanagi for lots of fruitful comments on the following discussion.}.

First, we consider a class of extensions from $\rho_{AB}$ to $\rho_{ABE}$ which have classical gravity duals described by tensor networks. The extensions are not necessarily on the original AdS boundary, but each extended geometry should include the entanglement wedge $M_{AB}$ and its boundary should be convex \citep{MiyajiTakayanagi:SurfaceStateCorrespondence}. We assume that there exists an optimal extension for (\ref{eq:Def Esq}) in this class. Then we find that all nontrivial such extensions give $I(A:B|E)\geq I(A:B)$. Indeed this is equivalent to say that the holographic tripartite information is always negative: $\tilde{I}_3(A:B:E)\leq0$. This eventually leads that the optimal solution in (\ref{eq:Def Esq}) is given by half of the mutual information in holography
\begin{equation}
E_{sq}(A:B)=\frac{1}{2}I(A:B)\ . \label{eq: Esq=I/2}
\end{equation}
In the remaining text, we illustrate our argument in details.

To compute the squashed entanglement (\ref{eq:Def Esq}), we want to consider how much correlation one can reduce between $A$ and $B$ by knowing the ancillary system $E$. This is appropriately quantified by the tripartite information
\begin{equation}
\tilde{I}_3(A:B:E)=I(A:B)-I(A:B|E).
\end{equation}
This quantity has been widely studied in holography. In particular, the so called monogamy of mutual information (MMI) shows that $\tilde{I}_3(A:B:E)$ is always negative in holography in the case $E$ is on the original AdS boundary \citep{HHM:MonogamyofHMI}
\begin{equation}
\label{theMMI}
I(A:BE)\geq I(A:B)+I(A:E)\ \Leftrightarrow \ \tilde{I}_3(A:B:E)\leq 0.
\end{equation}
Hence one can never reduce the correlation between $A$ and $B$ by knowing the extension $E$ located on the original AdS boundary, i.e. $I(A:B|E)\geq I(A:B)$ holds in this case (Fig.\ref{fig:MEoP11}). Note that (\ref{theMMI}) is a characteristic property of holographic states, since it is not generally satisfied by arbitrary tripartite quantum states.

\begin{figure}[H]
\centering{}\includegraphics[scale=0.33]{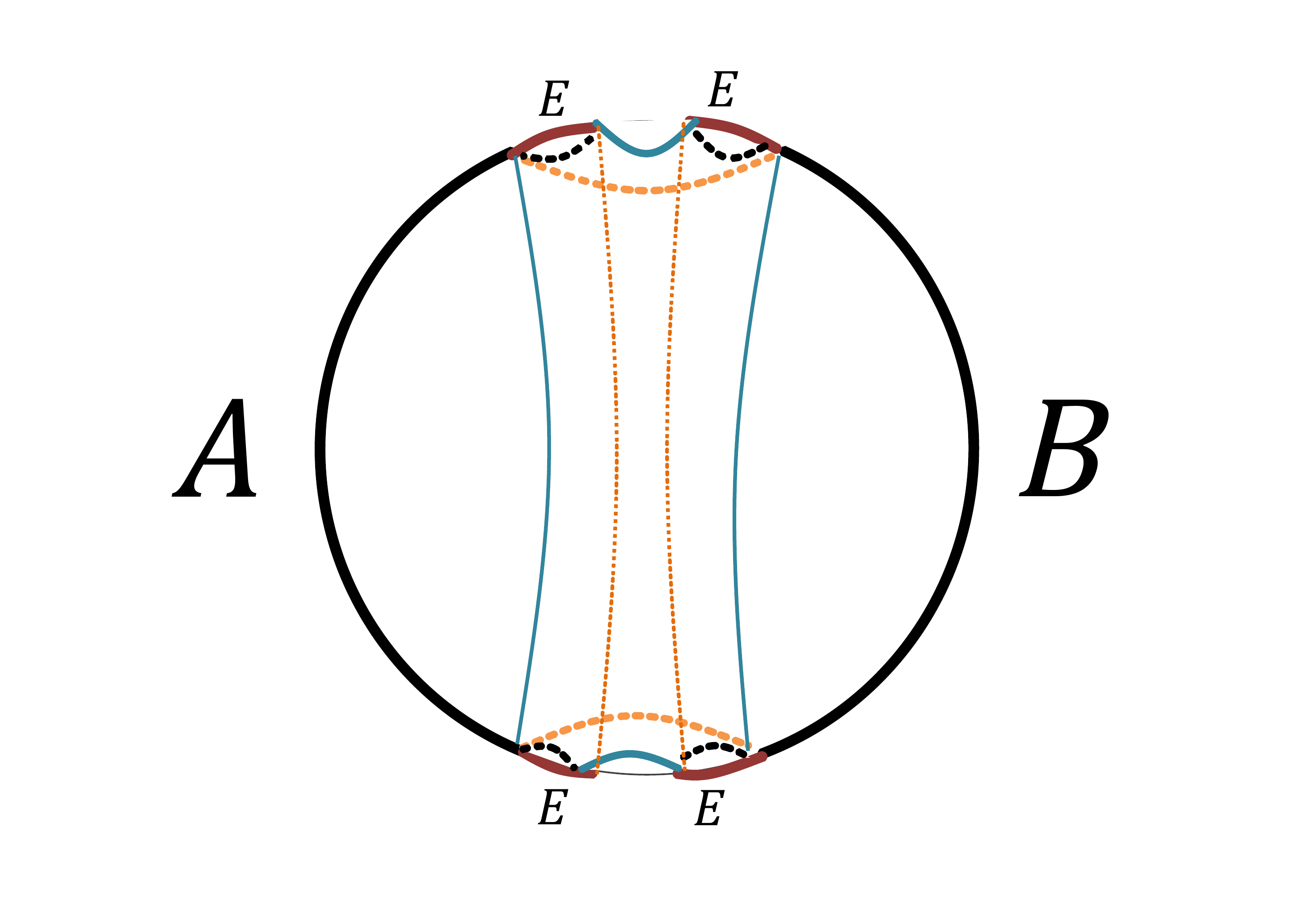}\caption{\label{fig:MEoP11} A proof of $I(A:B|E)\geq I(A:B)$ for an extension $\rho_{ABE}$, where $E$ is on the original boundary \citep{HHM:MonogamyofHMI}. The difference $I(A:B|E)-I(A:B)$ is given by the area of dashed orange codimension-2 surfaces minus that of solid blue ones, which is clearly non-negative by the minimal property of RT-surfaces. }
\end{figure}

Now, we would like to assert that this type of monogamy (\ref{theMMI}) is still true in more general holographic states we consider, as long as the entanglement entropy is still given by a minimal area functional in dual geometry\footnote{One can replace in this argument the area functional to any geometrical functional as long as it is extensive \citep{HHM:MonogamyofHMI}.  In particular, the authors of \citep{HHM:MonogamyofHMI} showed that higher curvature corrections does not affect the MMI, and conjectured that it will be a common property of all large-N field theories.}. This is because the proof of MMI in \citep{HHM:MonogamyofHMI} does not rely on any peculiarity of asymptotic AdS boundary. In particular, it does not concern whether the boundary is at the asymptotic infinity or not. This fact temps us to assume that all extensions with classical gravity duals satisfy the monogamy of mutual information (or equivalentally $\tilde{I}_3(A:B:E)\leq 0$).
This situation can be graphically illustrated for example in Fig.\ref{fig:MEoP12}.

\begin{figure}[H]
\centering{}\includegraphics[scale=0.33]{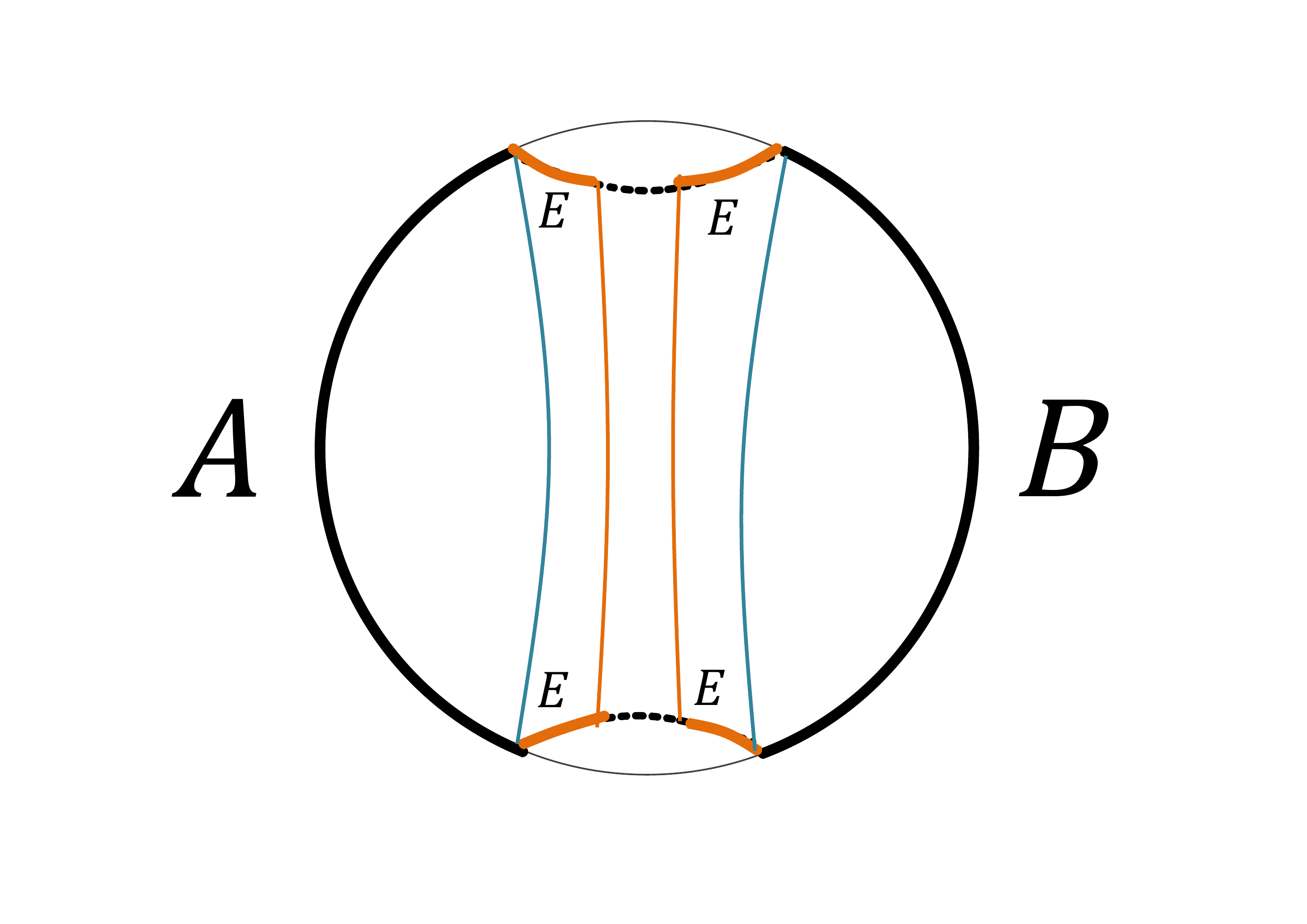}\caption{\label{fig:MEoP12} A proof of $I(A:B|E)\geq I(A:B)$ for an extension $\rho_{ABE}$, where $E$ is on the minimal purification (i.e. RT-surface of $S_{AB}$). The difference $I(A:B|E)-I(A:B)$ in this case is given by the area of the orange codimension-2 surfaces minus that of the blue ones, which is certainly non-negative. For simplicity here we assumed that $E$ is relatively small, but the same result is true for large $E$ (i.e. near the purification).}
\end{figure}

From the above argument we conjecture that one cannot reduce the correlation between $A$ and $B$ by measuring an ancillary system $E$ i.e. $I(A:B|E)\geq I(A:B)$ holds for any extension with classical gravity dual. With the assumption that in (\ref{eq:Def Esq}) there exists an optimal extension with a classical geometrical description, the squashed entanglement in holography will be given by just half of the mutual information (at $O(N^2)$):
\begin{equation}
E_{sq}(A:B)=\frac{1}{2}I(A:B)\ ,
\end{equation}
which can clearly be achieved by a trivial extension. 
This tells that the inequality $E_{sq}(A:B)\leq I(A:B)/2$ is saturated in holography\footnote{This saturation can also be observed in a tensor network model of holography \citep{NezamiWalter:MultiEntinStabilizerTN}.}.

As a consistency check of this saturation conjecture, one can test whether the holographic properties of them (such as the additivity) coincide or not. 
As mentioned above, the mutual information does not always satisfy the monogamy, while the squashed entanglement does for any tripartite state \citep{KoashiWinter:monogamy}. The fact that the mutual information becomes monogamous in holography actually provides a non-trivial check of (\ref{eq: Esq=I/2}). One might also regard (\ref{eq: Esq=I/2}) as the origin of the MMI in holography.\\

Note that the saturation (\ref{eq: Esq=I/2}) is already implicated in \citep{HHM:MonogamyofHMI}, motivated by the MMI itself. \\

Moreover, we expect that this kind of saturation also happens for
multipartite cases. 
Namely, we are tempted to test the saturation of \eqref{eq:mEsq < mMI} in
holography (at $O(N^2)$): 
\begin{equation}
E_{sq}^{q}(A_{1}:\cdots:A_{n})=^{?}I(A_{1}:\cdots:A_{n}).\label{eq:mEsq=00003D00003DmMI}
\end{equation}
One of such non-trivial tests is given by the strong superadditivity of multipartite
squashed entanglement \citep{YHHHOS:mEsq}:
\begin{equation}
E_{sq}^{q}(A_{1}B_{1}:\cdots:A_{n}B_{n})\geq E_{sq}^{q}(A_{1}:\cdots:A_{n})+E_{sq}^{q}(B_{1}:\cdots:B_{n}).
\end{equation}
This is true in any $2n$-partite state $\rho_{A_{1}B_{1}\cdots A_{n}B_{n}}$.
Note that the multipartite mutual information does not satisfy this property in
general. For example, if one considers a quantum state 
\begin{equation}
\rho_{A_{1}B_{1}A_{2}B_{2}A_{3}B_{3}}=\frac{1}{\sqrt{2}}(\ket{000000}\bra{000000}_{A_{1}B_{1}A_{2}B_{2}A_{3}B_{3}}+\ket{111111}\bra{1111111}_{A_{1}B_{1}A_{2}B_{2}A_{3}B_{3}}),
\end{equation}
This will lead to $I(A_{1}B_{1}:A_{2}B_{2}:A_{3}B_{3})=I(A_{1}:A_{2}:A_{3})=I(B_{1}:B_{2}:B_{3})$, which clearly violates the inequality.

However, in holographic CFTs, one can show that the multipartite mutual information does
satisfy the strong superadditivity. This comes as follows: 
\begin{align}
I(A_{1}B_{1}:\cdots:A_{n}B_{n}) & =I(A_{1}B_{1}:A_{2}B_{2})+I(A_{1}B_{1}A_{2}B_{2}:A_{3}B_{3})+\cdots \nonumber \\ &\ \ \ \  \  \cdots+I(A_{1}B_{1}A_{2}B_{2}\cdots A_{n-1}B_{n-1}:A_{n}B_{n})\nonumber \\
 & \geq I(A_{1}:A_{2})+I(A_{1}A_{2}:A_{3})+\cdots+I(A_{1}A_{2}\cdots A_{n-1}:A_{n})\nonumber \\
 & \ \ \ +I(B_{1}:B_{2})+I(B_{1}B_{2}:B_{3})+\cdots+I(B_{1}B_{2}\cdots B_{n-1}:B_{n})\nonumber \\
 & =I(A_{1}:\cdots:A_{n})+I(B_{1}:\cdots:B_{n}),
\end{align}
where we have used the monogamy of holographic mutual information recursively.
This observation gives us further evidence for our conjectured saturation in holography
\eqref{eq:mEsq=00003D00003DmMI}.

To our best knowledge, there is no counter example for such saturation conjecture. 
If there is a property which $E_{sq}$ always satisfies but $I$ does not always, then one can try to test whether it holds for holographic mutual information. If it does, this can be considered as additional positive evidence for our conjecture.\footnote{Note that related properties about holographic entanglement entropy
are intensively studied in \citep{BNOSS:EntropyCone} for multipartite
setups.}


\end{document}